\DeclareMathAlphabet{\mathpzc}{OT1}{pzc}{m}{it}
\let\mathpzc\mathscr
\let\mathpzc\mathcal
\def\BNF{\ \  | \ \  }
\newtheorem{theorem}{Theorem}[section]
\newtheorem{lemma}[theorem]{Lemma}
\newtheorem{proposition}[theorem]{Proposition}
\newenvironment{definition}[1][Definition]{\begin{trivlist}
\item[\hskip \labelsep {\bfseries #1}]}{\end{trivlist}}
\newenvironment{remark}[1][Remark]{\begin{trivlist}
\item[\hskip \labelsep {\bfseries #1}]}{\end{trivlist}}
\renewcommand{\beq}{\simeq}
\def \rightarrowfill{\m@th\mathord{\smash-}\mkern-6mu%
  \cleaders\hbox{$\mkern-2mu\mathord{\smash-}\mkern-2mu$}\hfill
  \mkern-6mu\mathord\to}
\def \Rightarrowfill{\m@th\mathord{\smash=}\mkern-6mu%
  \cleaders\hbox{$\mkern-2mu\mathord{\smash=}\mkern-2mu$}\hfill
  \mkern-6mu\mathord\Rightarrow}
\def\lamdo{\Lang_{A,M,D,\mathit{NO}}}
\def\lamco{\Lang_{A,M,C,\mathit{NO}}}
\def\lapdo{\Lang_{A,P,D,\mathit{NO}}}
\def\lapdn{\Lang_{A,P,D,\mathit{NM}}}
\def\lapco{\Lang_{A,P,C,\mathit{NO}}}
\def\lapcn{\Lang_{A,P,C,\mathit{NM}}}
\def\lsmdi{\Lang_{S,M,D,I}}
\def\lsmco{\Lang_{S,M,C,\mathit{NO}}}
\def\lsmci{\Lang_{S,M,C,I}}
\def\lspco{\Lang_{S,P,C,\mathit{NO}}}
\newcommand{\ifte}[4]{{\bf if}\ #1=#2\ {\bf then}\ #3\ {\bf else}\ #4}
\newcommand{\ift}[3]{{\bf if}\ #1=#2\ {\bf then}\ #3}
\newcommand\rn{{\sf rn}}
\title{On the Expressiveness of Intensional Communication}
\author{Thomas Given-Wilson
\institute{INRIA, Paris, France
\footnote{This work has been supported by the project ANR-12-IS02-001 PACE.}
}
\email{thomas.given-wilson@inria.fr}
}
\begin{document}
\makeatactive

\maketitle  

\begin{abstract}
The expressiveness of communication primitives has been explored in a common framework
based on the $\pi$-calculus by considering
four features:
{\em synchronism} (asynchronous vs synchronous),
{\em arity} (monadic vs polyadic data),
{\em communication medium} (shared dataspaces vs channel-based),
and {\em pattern-matching} (binding to a name vs testing name equality).
Here pattern-matching is generalised to account for terms with internal structure such as in recent
calculi like Spi calculi, Concurrent Pattern Calculus and Psi calculi.
This paper explores {\em intensionality} upon terms, in particular
communication primitives that can match upon both names and structures.
By means of possibility/impossibility of encodings, this paper shows that
intensionality alone can encode synchronism, arity, communication-medium, and
pattern-matching, yet no combination of these without intensionality can encode any intensional language.
\end{abstract}

\section{Introduction}

The expressiveness of process calculi based upon their choice of communication primitives
has been explored before
\cite{Palamidessi:2003:CEP:966707.966709,journals/iandc/BusiGZ00,DeNicola:2006:EPK:1148743.1148750,G:IC08,GivenWilsonPHD}.
In \cite{G:IC08} this is detailed by examining combinations of four features, namely:
{\em synchronism}, asynchronous versus synchronous;
{\em arity}, monadic versus polyadic;
{\em communication medium}, shared dataspaces versus channels;
and {\em pattern-matching}, purely binding names versus name equality.
These features are able to represent many popular calculi \cite{G:IC08} such as:
asynchronous or synchronous,  monadic or polyadic $\pi$-calculus
  \cite{Milner:1992:CMP:162037.162038,Milner:1992:CMP:162037.162039,milner:polyadic-tutorial};
\Linda~\cite{Gel85};
Mobile Ambients  \cite{DBLP:conf/fossacs/CardelliG98};
$\mu${\sc Klaim} \cite{10.1109/32.685256};
and semantic-$\pi$ \cite{Castagna:2008:SSP:1367144.1367262}.
However, some recent process calculi include communicable primitives that
have structure such as:
Spi calculus \cite{Abadi:1997:CCP:266420.266432} and pattern-matching Spi calculus \cite{Haack:2006:PS:1165126.1165127};
Concurrent Pattern Calculus (CPC) \cite{GivenWilsonGorlaJay10,givenwilson:hal-00987578}, and variations
thereof \cite{GivenWilsonPHD};
and Psi calculi \cite{BJPV11} and sorted Psi calculi \cite{DBLP:conf/tgc/BorgstromGPVP13}.
Indeed (with the exception of Spi calculus) these calculi include communication
primitives that account for the structure of the terms being communicated.

This paper abstracts away from specific calculi 
to provide a general account of the expressiveness of {\em intensional}
communication primitives. Here intensionality is an advanced form of pattern-matching
that allows {\em compound} structures of the form $s\bullet t$ to be bound to a single
name, or to have their structure and components be matched in communication.
For example, consider the following processes:
\begin{equation*}
P \ \define\  \oap {} {a\bullet b}\qquad\quad
Q \ \define\  \iap {} {x\bullet y} .Q'\qquad\quad
R \ \define\  \iap {} {z} .R'\qquad\quad
S \ \define\  \iap {} {\pro a\bullet \pro b}.S'
\end{equation*}
where $P$ is an output of the compound $a\bullet b$.
The inputs of $Q$ and $R$ have binding names of the form $x$ in their
patterns $x\bullet y$ and $z$, respectively.
The input of $S$ tests the names $a$ and $b$ for equality and performs no binding.
These process can be combined to form three possible reductions:
\begin{eqnarray*}
P\bnf Q \redar \{a/x,b/y\}Q' &\quad\qquad
P\bnf R \redar \{a\bullet b/z\}R'\quad\qquad&
P\bnf S \redar S'\; .
\end{eqnarray*}
The first matches the structure of the output of $P$ with the input of $Q$ and binds
$a$ and $b$ to $x$ and $y$, respectively, in $Q'$.
The second binds the entire output of $P$ to the single name $z$ in $R'$.
The third matches the structure and names of the output of $P$ with the structure
and names of the input of $\mathit{S,}$ as they match they interact although no binding
or substitution occurs.
This binding of arbitrary structures to a single name, combined with the
name equality testing of the pattern-matching in \cite{G:IC08}
yields a more expressive intensionality in communication.

By generalising the pattern-matching feature to include intensionality the
original sixteen calculi of \cite{G:IC08} are here expanded to twenty-four.
This paper details how all of the eight new calculi are more expressive
than all of the original sixteen.

The key results 
are that intensionality is sufficient to encode:
synchronous communication into asynchronous communication,
polyadic communication into monadic communication,
channel-based communication into dataspace-based communication,
and generalises the original form of pattern-matching.
The more interesting results are the encoding of polyadicity and channel-based communication,
into monadic and dataspace-based communication, as synchronicity into asynchronicity is
straightforward when either channel-based communication, or polyadicity and name-matching are available.
(Encoding lesser pattern-matching into intensionality is trivial.)

In the other direction intensionality is impossible to encode with any other combination of the other
features.
This arises from the complexity of information that can be used to control communication in intensional interactions.
The key to the result can be intuited by considering the encoding of a minimal input process
$S_0=\iap {} x.P$ and a minimal output process $S_1=\oap {} a$.
Their encodings must interact with some arity $i$, i.e. the reduction
$\encode {S_0\bnf S_1}\redar$ is between an input and output of arity $i$.
Now a process $S_2$ can be constructed that outputs greater than $i$ distinct names, along with
a process $S_3$ that exactly matches all these names in a single interaction and evolves to $Q$.
It follows that $\encode{S_0\bnf S_2}$ reduces with some arity~$j$ and also
$\encode{S_2\bnf S_3}$ reduces with some arity $k$.
Now, if $i=j=k$ then at least one name is not being tested for equality in the reduction
of $\encode{S_2\bnf S_3}$ so there exists an $S_4$ that differs from $S_3$ by only that name.
Thus $\encode{S_4}$ reduces with $\encode{S_2}$, but this contradicts a reasonable encoding as $S_4$
does not interact with $S_3$.
If $i\neq j$ or $j\neq k$ then it is possible to show that the encoded process that is
involved in the two different arities must be able to take either reduction.
It follows that the process can take both reductions and reduce with two other encoded process
and this leads to contradiction of the encoding.
Either, it would be that $\encode{S_0\bnf S_2\bnf S_3}\Redar\encode{P\bnf Q}$ which is a contradiction
as $S_0\bnf S_2\bnf S_3\not\Redar P\bnf Q$.
Or, it would be that the encoding introduces divergent computation, which contradicts a reasonable encoding.

The structure of the paper is as follows.
Section~\ref{sec:calculi} introduces the twenty-four calculi considered here.
Section~\ref{sec:encoding} revises the criteria used for encoding and comparing calculi.
Section~\ref{sec:synch} explores synchronism into intensionality.
Section~\ref{sec:arity} details arity into intensionality.
Section~\ref{sec:medium} formalises communication-medium into intensionality and concludes
that intensionality can encode all other features.
Section~\ref{sec:impossible} presents the impossibility of encoding intensionality into
any other non-intensional calculus.
Section~\ref{sec:conclude} concludes, discusses future and related work, and some motivations
for intensional calculi.

\section{Calculi}
\label{sec:calculi}

This section defines the syntax, operational, and behavioural semantics of the calculi
considered here. This relies heavily on the well-known notions developed for the
$\pi$-calculus, the reference framework, and adapts them when necessary to cope with
different features.

\subsection{Syntax}

Assume a countable set of names ${\mathcal N}$ ranged over by $a,b,c,\ldots$. Traditionally in 
$\pi$-calculus-style calculi names are used for channels, input bindings, and output data. However, here these need to be generalised to account for structure. Thus define the {\em terms} denoted with $s,t,\ldots$ to be
\begin{eqnarray*}
s,t &::=& a\BNF s\bullet t\; .
\end{eqnarray*}
Terms can consist of names such as $a$, or of {\em compounds} $s\bullet t$ that combines two terms into one.
The choice of the $\bullet$ as compound operator is similar to Concurrent Pattern Calculus, and also to be clearly distinct from the traditional comma-separated tuples of polyadic calculi.

The input primitives of different languages will exploit different kinds of patterns.
The non-pattern-matching languages will simply use binding names, denoted $x,y,z,\ldots$.
The {\em name-matching} patterns, denoted $m,n,o,\ldots$ and defined by
\begin{eqnarray*}
m,n &::=& x\BNF \pro a
\end{eqnarray*}
consist of either a {\em binding name} $x$, or a {\em name-match} $\pro a$.
Lastly the intensional patterns, denoted $p,q,\ldots$ will also consider structure and are defined by
\begin{eqnarray*}
p,q &::=& m\BNF p\bullet q \; .
\end{eqnarray*}
The binding names $x$ and name-match $\pro a$ are contained in $m$ from the name-matching calculi,
the {\em compound pattern} $p\bullet q$ combines $p$ and $q$ into a single pattern, and are left associative.
The free names and binding names of name-matching and intensional patterns are as expected, taking
the union of sub-patterns for compound patterns. Note that an intensional pattern is well-formed
if and only if all binding names within the pattern are pairwise distinct.
The rest of this paper will only consider well-formed intensional patterns.

The (parametric) syntax for the languages is:
\begin{eqnarray*}
P,Q,R &::=& \zero\BNF OutProc \BNF IN.P \BNF \res n P \BNF P|Q\BNF \ifte s t P Q \BNF *P\BNF\ok\; .
\end{eqnarray*}
The different languages are obtained by replacing the output $OutProc$ and input $IN.P$ with the various definitions.
The rest of the process forms as are usual:
$\zero$ denotes the null process;
restriction $\res n P$ restricts the visibility of $n$ to $P$;
and parallel composition $P|Q$ allows independent evolution of $P$ and $Q$.
The $\ifte s t P Q$ represents conditional equivalence with $\ift s t P$ used when $Q$ is $\zero$.
The~$*P$ represents replication of the process $P$.
Finally, the $\ok$ is used to represent a success process or state, exploited for reasoning about
encodings as in \cite{G:CONCUR08,GivenWilsonPHD}.

This paper considers the possible combinations of four features for communication:
{\em synchronism} (synchronous vs asynchronous),
{\em arity} (monadic vs polyadic data),
{\em communication medium} (message passing vs shared dataspaces),
and {pattern-matching} (simple binding vs name equality vs intensionality).
As a result there exist twenty-four languages denoted as $\Lambda_{s,a,m,p}$ whose generic element is denoted as $\Lang_{\alpha,\beta,\gamma,\delta}$ where:
\begin{itemize}
\item $\alpha = A$ for asynchronous communication, and $\alpha = S$ for synchronous communication.
\item $\beta = M$ for monadic data, and $\beta = P$ for polyadic data.
\item $\gamma = D$ for dataspace-based communication, and $\gamma = C$ for channel-based communications.
\item $\delta = \mathit{NO}$ for no matching capability, $\delta = \mathit{\mathit{NM}}$ for name-matching, and $\delta = I$ for intensionality.
\end{itemize}
For simplicity a dash $-$ will be used when the instantiation of that feature is unimportant.

Thus the syntax of every language is obtained from the following productions:
\begin{equation*}
\begin{array}{rclll}
\Lang_{A,-,-,-}:&&OutProc ::= OUT\\
\Lang_{S,-,-,-}:&&OutProc ::= OUT.P\\
\Lang_{-,M,D,\mathit{NO}}:&&P,Q,R ::=\ldots &IN ::= \iap {} x& OUT::= \oap {} a\\
\Lang_{-,M,D,\mathit{NM}}:&&P,Q,R ::=\ldots &IN ::= \iap {} m& OUT::= \oap {} a\\
\Lang_{-,M,D,I}:&&P,Q,R ::=\ldots &IN ::= \iap {} p& OUT::= \oap {} t\\
\Lang_{-,M,C,\mathit{NO}}:&&P,Q,R ::=\ldots &IN ::= \iap a x& OUT::= \oap a b\\
\Lang_{-,M,C,\mathit{NM}}:&&P,Q,R ::=\ldots &IN ::= \iap a m& OUT::= \oap a b\\
\Lang_{-,M,C,I}:&&P,Q,R ::=\ldots &IN ::= \iap s p& OUT::= \oap s t\\
\Lang_{-,P,D,\mathit{NO}}:&&P,Q,R ::=\ldots &IN ::= \iap {} {\wt x}& OUT::= \oap {} {\wt a}\\
\Lang_{-,P,D,\mathit{NM}}:&&P,Q,R ::=\ldots &IN ::= \iap {} {\wt m}& OUT::= \oap {} {\wt a}\\
\Lang_{-,P,D,I}:&&P,Q,R ::=\ldots &IN ::= \iap {} {\wt p}& OUT::= \oap {} {\wt t}\\
\Lang_{-,P,C,\mathit{NO}}:&&P,Q,R ::=\ldots &IN ::= \iap a {\wt x}& OUT::= \oap a {\wt b}\\
\Lang_{-,P,C,\mathit{NM}}:&&P,Q,R ::=\ldots &IN ::= \iap a {\wt m}& OUT::= \oap a {\wt b}\\
\Lang_{-,P,C,I}:&\qquad&P,Q,R ::=\ldots &IN ::= \iap s {\wt p}\qquad& OUT::= \oap s {\wt t}\; .
\end{array}
\end{equation*}
Here the denotation $\wt \cdot$ represents a sequence of the form $\cdot _1,\cdot _2,\ldots,\cdot _n$ and can be used for names, terms, and both kinds of patterns.
As usual $\iap a {\ldots,x,\ldots}.P$ and $\res x P$ and $\iap {} {x\bullet\ldots} . P$ bind $x$ in $P$.
Observe that in $\iap a {\ldots,\pro b,\ldots} .P$ and $\iap {} {\ldots\bullet\pro b} .P$
neither $a$ nor $b$ bind in $P$, both are free.
The corresponding notions of free and bound names of a process, denoted ${\sf fn}(P)$ and ${\sf bn}(P)$,
are as usual.
Also note that alpha-conversion, denoted $=_\alpha$ is assumed in the usual manner.
Lastly, an input is well-formed if all binding names in that input occur exactly once, this paper shall only consider well-formed inputs.
Finally, the structural equivalence relation $\equiv$ is defined in Figure~\ref{fig:se}.
\begin{figure}
\begin{equation*}
\begin{array}{c}
P\bnf \zero \equiv P
\qquad\qquad
P\bnf Q \equiv Q \bnf P
\qquad\qquad
P\bnf (Q\bnf R) \equiv (P\bnf Q)\bnf R
\vspace{0.2cm} \\
\ifte s t P Q \equiv P\quad s = t
\qquad\qquad
\ifte s t P Q \equiv Q\quad s\neq t
\vspace{0.2cm} \\
P\equiv P'\quad\mbox{if}\ P=_\alpha P'
\qquad\qquad
\res a \zero \equiv \zero
\qquad \res a \res b P\equiv \res b \res a P
\vspace{0.2cm} \\
P\bnf \res a Q\equiv \res a (P\bnf Q)\quad \mbox{if}\ a\notin{\sf fn}(P)
\qquad\qquad
*P\equiv P\bnf *P \; .
\end{array}
\end{equation*}
\caption{Structural equivalence relation.}
\label{fig:se}
\end{figure}

Observe that $\lamco$, $\lapco$, $\lsmco$, and $\lspco$ align with the communication
primitives of the asynchronous/synchronous monadic/polyadic $\pi$-calculus
\cite{Milner:1992:CMP:162037.162038,Milner:1992:CMP:162037.162039,milner:polyadic-tutorial}.
The language $\lapdn$ aligns with \Linda \cite{Gel85};
the languages $\lamdo$ and $\lapdo$ with the monadic/polyadic Mobile Ambients \cite{DBLP:conf/fossacs/CardelliG98};
and $\lapcn$ with that of $\mu${\sc Klaim} \cite{10.1109/32.685256} or semantic-$\pi$ \cite{Castagna:2008:SSP:1367144.1367262}.

The intensional languages do not exactly match any well-known calculi.
Indeed, the combinations of asynchrony and intensionality, or polyadicity and
intensionality have no obvious candidates in the literature.
However,
the language $\lsmdi$ has been mentioned in \cite{GivenWilsonPHD}, 
as a variation of Concurrent Pattern Calculus \cite{GivenWilsonGorlaJay10,GivenWilsonPHD},
and has a behavioural theory as a specialisation of \cite{GivenWilsonGorla13}.
Similarly, the language $\lsmci$ is very similar to pattern-matching Spi calculus
\cite{Haack:2006:PS:1165126.1165127}
and Psi calculi \cite{BJPV11},
albeit with structural channel terms, and without the assertions or the possibility of repeated
binding names in patterns.
There are also similarities between $\lsmci$ and the polyadic synchronous $\pi$-calculus of
\cite{Carbone:2003:EPP:941344.941346}, although the intensionality is limited to the
channel, i.e.~inputs and outputs of the form $\iap s x.P$ and $\oap s a.P$ respectively.

\begin{remark}
\label{rem:leq}
The languages $\Lambda_{s,a,m,p}$ can be easily ordered; in particular
$\Lang_{\alpha_1,\beta_1,\gamma_1,\delta_1}$ can be encoded into
$\Lang_{\alpha_2,\beta_2,\gamma_2,\delta_2}$ if it holds that
$\alpha_1\leq\alpha_2$ and
$\beta_1\leq\beta_2$ and
$\gamma_1\leq\gamma_2$ and
$\delta_1\leq\delta_2$, where $\leq$ is the least reflexive relation satisfying the following axioms:
\begin{equation*}
A \leq S\qquad\qquad M\leq P\qquad\qquad D\leq C\qquad\qquad \mathit{NO}\leq \mathit{NM}\leq I \; .
\end{equation*}
This can be understood as the lesser language variation being a special case of the more general language.
Asynchronous communication is synchronous communication with all output followed by $\zero$.
Monadic communication is polyadic communication with all tuples of arity one.
Dataspace-based communication is channel-based communication with all $k$-ary tuples communicating with channel name $k$.
Lastly, all name-matching communication is intensional communication without any compounds,
and no-matching capability communication is both without any compounds and with only binding names
in patterns.
\end{remark}

\subsection{Operational Semantics}
\label{subsec:op-sem}

The operational semantics of the languages is given here via reductions as in
\cite{milner:polyadic-tutorial,Honda95onreduction-based}.
An alternative style is via a {\em labelled transition system} (LTS) such as \cite{G:IC08}.
Here the reduction based style is to simplify having to define here the (potentially complex)
labels that occur when intensionality is in play. However, the LTS style can be
used for intensional languages \cite{BJPV11,GivenWilsonPHD,GivenWilsonGorla13}, and indeed
captures many\footnote
{Perhaps all of the languages here, although this has not been proven.}
of the languages here \cite{GivenWilsonGorla13}.

Substitutions, denoted $\sigma,\rho,\ldots$, in non-pattern-matching and name-matching
languages are mappings (with finite domain) from names to names. For intensional languages
substitutions are mappings from names to terms.
The application of a substitution $\sigma$ to a pattern $p$ is defined as follows:
\begin{equation*}
\sigma x = \sigma(x)\ \ x\in\mbox{domain}(\sigma)\qquad
\sigma x = x\ \ x\not\in\mbox{domain}(\sigma)\qquad
\sigma \pro x = \pro {(\sigma x)}\qquad
\sigma (p\bullet q) = (\sigma p)\bullet(\sigma q)\; .
\end{equation*}
Where substitution is as usual on names, and on the understanding that the name-match
syntax can be applied to any term by the following definition:
\begin{equation*}
\pro x \define \pro x\qquad\qquad
\pro {(s\bullet t)} \define \pro s\bullet \pro t\; .
\end{equation*}

Given a substitution $\sigma$ and a process $P$, denote with $\sigma P$ the
(capture-avoiding) application of $\sigma$ to $P$ that behaves in the usual manner.
Note that capture can always be avoided by exploiting $\alpha$-equivalence, which can
in turn be assumed due to \cite{UBN07,BP09}.

\renewcommand{\match}[2]{\{#1/\!\!/#2\}}

Interaction between processes is handled by matching some terms $\wt t$ with
some patterns $\wt p$, and possibly also equivalence of channel-names.
This is handled in two parts, the {\em match} rule $\match t p$ of a single term $t$
with a single pattern $p$ to create a substitution $\sigma$.
This is defined as follows:
\begin{eqnarray*}
\begin{array}{rcl}
\match t x &\define& \{t/x\}\\
\match a {\pro a} &\define& \{\}
\end{array}&\qquad\qquad&
\begin{array}{rcl}
\match {s\bullet t} {p\bullet q} &\define& \match s p \cup \match t q\\
\match t p &\mbox{undefined}& \mbox{otherwise.}
\end{array}
\end{eqnarray*}
Any term $t$ can be matched with a binding name $x$ to generate a substitution from the
binding name to the term $\{t/x\}$.
A single name $a$ can be matched with a name-match for that name $\pro a$ to yield the
empty substitution.
A compound term $s\bullet t$ can be matched by a compound pattern $p\bullet q$ when
the components match to yield substitutions $\match s p=\sigma_1$ and $\match t q=\sigma_2$,
the resulting substitution is the unification of $\sigma_1$ and $\sigma_2$.
Observe that since patterns are well-formed, the substitutions of components will always have
disjoint domain.
Otherwise the match is undefined. 

The general case is then the {\em poly-match} rule $\pmtch (\wt t ; \wt p)$ that
determines matching of a sequence of terms $\wt t$ with a sequence of patterns $\wt p$,
that is defined below.
\begin{equation*}
\pmtch(;)=\emptyset \qquad
\prooftree \match s p =\sigma_1 \qquad \pmtch (\wt t ; \wt q) =\sigma_2
\justifies \pmtch(s,\wt t;p,\wt q) = \sigma_1\uplus \sigma_2
\endprooftree \; .
\end{equation*}
The empty sequence matches with the empty sequence to produce the empty substitution.
Otherwise when there is a sequence of terms $s,\wt t$ and a sequence of patterns $p,\wt q$,
the first elements are matched $\match s p$ and the remaining sequences use the poly-match rule.
If both are defined and yield substitutions, then the disjoint union $\uplus$ of substitutions is the
result.
(Like the match rule, the disjoint union is ensured by well-formedness of inputs.)
Otherwise the poly-match rule is undefined, for example when a single match fails, or the
sequences are of unequal arity.

Interaction is now defined by the following axiom:
\begin{eqnarray*}
\oap s {\wt t}.P\bnf \iap s {\wt p}.Q &\quad\redar\quad& P\bnf\sigma Q
\qquad\qquad \pmtch(\wt t;\wt p)=\sigma
\end{eqnarray*}
where the $P$s are omitted in the asynchronous languages,
and the $s$'s are omitted for the dataspace-based languages.
The axiom states that when the poly-match of the terms of an output $\wt t$ match with the
patterns of an input $\wt p$ (and in the channel-based setting the output and input are along
the same channel) yields a substitution $\sigma$, then reduce to ($P$ in the synchronous languages
in parallel with) $\sigma$ applied to $Q$.

The general reduction relation $\redar$ is defined as follows:
\begin{equation*}
\begin{array}{c}
\prooftree
\justifies
\oap s {\wt t}.P\bnf \iap s {\wt p}.Q \quad\redar\quad P\bnf\sigma Q
\endprooftree \quad\pmtch(\wt t;\wt p)=\sigma
\vspace{0.2cm}\\
\prooftree P\redar P'
\justifies P\bnf Q \redar P'\bnf Q
\endprooftree \qquad
\prooftree P\redar P'
\justifies \res a P \redar \res a P'
\endprooftree \qquad
\prooftree P\equiv Q\quad Q\redar Q'\quad Q'\equiv P'
\justifies P\redar P'
\endprooftree\quad
\end{array}
\end{equation*}
with $\Redar$ denoting the reflexive, transitive closure of $\redar$.

Lastly, for each language let $\beq$ denote
a reduction-sensitive reference behavioural equivalence for that language, e.g.~a barbed equivalence.
For the non-intensional languages these are already known, either by
their equivalent language in the literature, such as asynchronous/synchronous monadic/polyadic
$\pi$-calculus, or from \cite{G:IC08}.
For the intensional languages the results in \cite{GivenWilsonGorla13} can be used.

\section{Encodings}
\label{sec:encoding}

This section recalls and adapts the definition of valid encodings as well as some
useful theorems (details in \cite{G:CONCUR08}) for formally relating process calculi.
The validity of such criteria in developing expressiveness studies emerges from the
various works \cite{G:IC08,G:DC10,G:CONCUR08}, that have also recently inspired similar works
\cite{LPSS10,Lanese:2010:EPP:2175486.2175506,gla12}. 

An {\em encoding} of a language $\Lang_1$ into another language $\Lang_2$ is a pair
$(\encode\cdot,\renpol)$ where $\encode\cdot$ translates every $\Lang_1$-process into
an $\Lang_2$-process and $\renpol$ maps every name (of the source language) into a tuple
of $k$ names (of the target language), for $k > 0$.
The translation $\encode\cdot$ turns every term of the source language into a term of the
target; in doing this, the translation may fix some names to play a precise r\^ole 
or may translate a single name into a tuple of names. This can be obtained
by exploiting $\renpol$.

Now consider only encodings that satisfy the following properties.
Let a {\em $k$-ary context} $\context C {\_\,_1; \ldots; \_\,_k}$ be a term where $k$
occurrences of $\zero$ are linearly replaced by the holes $\{\_\,_1;
\ldots; \_\,_k\}$ (every one of the $k$ holes must occur once and only once).
Moreover, denote with $\redar_i$ and $\Redar_i$ 
the relations $\redar$ and $\Redar$ in language $\Lang_i$;
denote with $\redar^\omega_i$ an infinite sequence of reductions in $\Lang_i$.
Moreover, we let $\beq_i$ denote the reference behavioural equivalence for language $\Lang_i$.
Also, let $P \suc_i$ mean that there exists $P'$ such that $P \Redar_i P'$ and $P' \equiv P''\bnf \ok$,
for some $P''$.
Finally, to simplify reading, let $S$ range
over processes of the source language (viz., $\Lang_1$) and $T$ range
over processes of the target language (viz., $\Lang_2$).

\begin{definition}[Valid Encoding]
\label{def:ve}
An encoding $(\encode\cdot,\renpol)$ of $\Lang_1$ into $\Lang_2$
is {\em valid} if it satisfies the following five properties:
\begin{enumerate}
\item {\em Compositionality:} for every $k$-ary operator $\op$ of $\Lang_1$
and for every subset of names $N$,
there exists a $k$-ary context $\CopN C \op N {\_\,_1; \ldots; \_\,_k}$ of $\Lang_2$
such that, for all $S_1,\ldots,S_k$ with ${\sf fn}(S_1,\ldots,S_k) = N$, it holds
that $\encode{\op(S_1,\ldots,S_k)} = \CopN C \op N {\encode{S_1};\ldots;\encode{S_k}}$.

\item {\em Name invariance:}
for every $S$ and name substitution $\sigma$, it holds that
$$
\encode{\sigma S}\ \left\{ 
\begin{array}{ll}
\ =\ \sigma'\encode S& \mbox{ if $\sigma $ is injective}\\
\ \beq_2\ \sigma'\encode S  & \mbox{ otherwise}
\end{array}
\right.
$$
where $\sigma'$ is such that 
$\renpol(\sigma(a)) = \sigma'(\renpol(a))$
for every name $a$.

\item {\em Operational correspondence:}
\begin{itemize}
\item for all $S \Redar_1 S'$, it holds that $\encode S \Redar_2 \beq_2 \encode {S'}$;
\item for all $\encode S \Redar_2 T$, there exists $S'$ such that $S \Redar_1\!\! S'$ 
and $T \Redar_2 \beq_2\!\! \encode {S'}$.
\end{itemize}

\item {\em Divergence reflection:}
for every $S$ such that 
$\encode S \redar\!\!_2^\omega$, it holds that 
\linebreak $S$ \mbox{$\redar\!\!_1^\omega$}.

\item {\em Success sensitiveness:}
for every $S$, it holds that $S \suc_1$ if and only if $\encode S \suc_2$.
\end{enumerate}
\end{definition}

Now recall a result concerning valid encodings that is useful for showing
i.e.\ for proving that no valid encoding can exist between
a pair of languages ${\mathcal L}_1$ and ${\mathcal L}_2$.

\begin{proposition}[Proposition 5.5 from \cite{G:CONCUR08}]
\label{prop:deadlock}
Let $\encode\cdot$ be a valid encoding; then, $S \noredar\!\!_1$
implies that $\encode S \noredar\!\!_2$.
\end{proposition}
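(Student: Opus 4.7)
The plan is to argue by contradiction. Suppose $S \noredar_1$ but $\encode S \redar_2 T_0$ for some $T_0$. The second clause of operational correspondence, applied to the reduction sequence $\encode S \Redar_2 T_0$, yields some $S_1$ with $S \Redar_1 S_1$ and $T_0 \Redar_2 T_1$ for some $T_1 \beq_2 \encode{S_1}$. Since $\Redar_1$ is the reflexive--transitive closure of $\redar_1$ and $S$ has no outgoing reduction at all, $S \Redar_1 S_1$ can only hold through zero reduction steps, so $S_1 = S$. Consequently I obtain the non-trivial chain
\[
\encode S \;\redar_2\; T_0 \;\Redar_2\; T_1 \;\beq_2\; \encode S,
\]
containing at least one reduction and returning, up to $\beq_2$, to $\encode S$.

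Next I would bootstrap this into an infinite reduction sequence from $\encode S$. Since $\beq_2$ is a \emph{reduction-sensitive} behavioural equivalence and $\encode S$ can reduce, its $\beq_2$-equivalent $T_1$ must likewise be able to perform a reduction, say $T_1 \redar_2 T_1^{\star}$. Feeding the longer trace $\encode S \Redar_2 T_1^{\star}$ into operational correspondence and reasoning exactly as in the previous paragraph (again exploiting $S \noredar_1$ to force the corresponding source process to equal $S$) produces $T_1^{\star} \Redar_2 T_2 \beq_2 \encode S$ and strictly lengthens the chain of reductions from $\encode S$. A straightforward induction then builds a strictly increasing sequence of reductions, witnessing $\encode S \redar_2^{\omega}$.

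Finally, divergence reflection converts $\encode S \redar_2^{\omega}$ into $S \redar_1^{\omega}$, which flatly contradicts $S \noredar_1$; hence $\encode S \noredar_2$. The main delicate point is the appeal to reduction-sensitivity of $\beq_2$ to lift the ability to reduce from $\encode S$ to each successive $\beq_2$-equivalent process. That is precisely what the qualifier ``reduction-sensitive reference behavioural equivalence'' in the definition of $\beq_i$ is designed to supply, so the step is legitimate; nevertheless it is the sole place where any property of $\beq_2$ beyond being an equivalence is invoked, and correctly packaging the iteration around it is the only non-routine part of the argument.
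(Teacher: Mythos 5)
Your argument is correct and is essentially the standard proof of this result from \cite{G:CONCUR08} (the paper here only cites the proposition rather than reproving it): use operational correspondence together with $S\noredar\!\!_1$ to force the residual of any target reduction back, up to $\beq_2$, to $\encode S$; use reduction-sensitivity of $\beq_2$ to keep reducing; and conclude $\encode S \redar\!\!_2^\omega$, contradicting divergence reflection. No gaps to report.
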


\section{Synchronism in Intensionality}
\label{sec:synch}

This section proves that intensionality is sufficient to encode synchronicity.
That is, that any language $\Lang_{S,\beta,\gamma,-}$ can be encoded into $\Lang_{A,\beta,\gamma,I}$.

The typical approach is to use channels and a fresh name to signal that the output has
been received and thus encode synchronicity \cite{G:IC08}.
The approach here exploits a fresh name and intensionality, with optional channel-based communication
to encode synchronicity in asynchronicity.
Consider the translation $\xtrans\cdot$
that is the identity on all primitives except for input and output which are as follows:
\begin{eqnarray*}
\xtrans{\iap s {p,\wt p} .P} &\define& \iap s {x\bullet p,\wt p}.(\oap x {x}\bnf \xtrans P )\\
\xtrans{\oap s {t,\wt t} .Q} &\define & \res x(\oap s {x\bullet t,\wt t}\bnf \iap x {\pro x}.\xtrans Q)
\end{eqnarray*}
where $x$ is not in the free names of $s$, $p$, $\wt p$, $t$, $\wt t$, $P$, or $Q$, and where
$\wt p$ and $\wt t$ are omitted in the monadic case, and the channels are omitted in
the dataspace-based communication case.
The input is translated to receive an additional name $x$ and then output this back to the
translated output to signal interaction has occurred.
Similarly the output restricts a fresh name $x$ and then transmits this along with the original term.
The continuation of the output is then placed under an input that only interacts with the fresh name.

\begin{lemma}
\label{lem:minussynch-match}
Given a synchronous input $P$
and a synchronous output $Q$
then $\xtrans P \bnf \xtrans Q \redar$ if and only if $P\bnf Q \redar$.
\end{lemma}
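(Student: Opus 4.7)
The plan is to unfold the definition of $\xtrans\cdot$ on both processes and argue that the only possible reduction of $\xtrans P \bnf \xtrans Q$ is the outer input--output interaction, whose match condition is equivalent to that of the original pair. Writing $P = \iap s {p, \wt p}.P'$ and $Q = \oap{s'}{t, \wt t}.Q'$ and choosing distinct fresh names $x_P$ and $x_Q$ for the two invocations of the encoding (each fresh in $P$, $Q$ respectively), one obtains
\begin{align*}
\xtrans P &= \iap s {x_P \bullet p, \wt p}.(\oap{x_P}{x_P} \bnf \xtrans{P'}), \\
\xtrans Q &= \res{x_Q}\bigl(\oap{s'}{x_Q \bullet t, \wt t} \bnf \iap{x_Q}{\pro{x_Q}}.\xtrans{Q'}\bigr).
\end{align*}

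For the ``if'' direction, assume $P \bnf Q \redar$. The interaction axiom gives $s = s'$ (in the channel-based case) together with a substitution $\sigma$ such that $\pmtch(t, \wt t; p, \wt p) = \sigma$. Extruding $\res{x_Q}$ via structural congruence (admissible since $x_Q$ is fresh and therefore not in the free names of $\xtrans P$), the outer output of $\xtrans Q$ can react with the outer input of $\xtrans P$ precisely when $\pmtch(x_Q \bullet t, \wt t; x_P \bullet p, \wt p)$ is defined. This poly-match decomposes as $\match{x_Q \bullet t}{x_P \bullet p} = \{x_Q/x_P\} \uplus \match t p$ together with $\pmtch(\wt t; \wt p)$, both of which are defined by hypothesis. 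Hence the translated redex fires.

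For the ``only if'' direction, consider any reduction of $\xtrans P \bnf \xtrans Q$. The inner sub-process $\oap{x_P}{x_P} \bnf \xtrans{P'}$ sits under the outer input prefix of $\xtrans P$ and is therefore inert; the inner input $\iap{x_Q}{\pro{x_Q}}.\xtrans{Q'}$ lies inside the restriction $\res{x_Q}$ and no occurrence of $x_Q$ appears in $\xtrans P$, so it cannot react with anything external. The only reduction available is therefore between the outer input of $\xtrans P$ and the outer output of $\xtrans Q$, which forces $s = s'$ and $\pmtch(x_Q \bullet t, \wt t; x_P \bullet p, \wt p)$ to be defined; by the same decomposition as above, this forces $\pmtch(t, \wt t; p, \wt p)$ to be defined, so $P \bnf Q \redar$.

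The main obstacle is the ``only if'' direction, and specifically the need to rigorously rule out spurious reductions arising from the auxiliary output $\oap{x_P}{x_P}$ and the auxiliary input on $x_Q$. Both obstructions are resolved by inspecting the scoping: the auxiliary output is strictly under the outer input prefix, and the auxiliary input's subject is bound by $\res{x_Q}$ and cannot escape. Once these are dispatched, what remains is the routine equivalence between matching $(t, \wt t)$ against $(p, \wt p)$ and matching $(x_Q \bullet t, \wt t)$ against $(x_P \bullet p, \wt p)$, which follows directly from the compound case of the $\match{}{}$ rule together with the fact that $x_P$ is a binding name.
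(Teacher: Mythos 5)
Your proof is correct and follows the same route as the paper, whose entire argument is ``by definition of the poly-match rule'': your unfolding of the translation and the decomposition $\match{x_Q\bullet t}{x_P\bullet p}=\{x_Q/x_P\}\uplus\match t p$ is exactly the intended reasoning. The one case your ``only if'' direction does not explicitly dispatch is the potential \emph{internal} redex of $\xtrans Q$ between $\oap {} {x_Q\bullet t,\wt t}$ and $\iap {x_Q} {\pro {x_Q}}$ in the dataspace-based target, where both lie unguarded in parallel under $\res {x_Q}$ (your scoping argument only rules out reactions with processes \emph{external} to the restriction); this redex also fails, since $\match{x_Q\bullet t}{\pro{x_Q}}$ is undefined because a compound cannot match a name-match on a single name (and in the channel-based target the channels $s'$ and $x_Q$ differ by freshness), so the conclusion stands.
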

\begin{proof}
The proof is by definition of the poly-match rule.
\end{proof}

\begin{lemma}
\label{lem:minussynch-equiv}
If $P\equiv Q$ then $\xtrans P \equiv \xtrans Q$.
Conversely, if $\xtrans P \equiv \xtrans Q$ then $Q=\xtrans {P'}$ for some $P'\equiv P$.
\end{lemma}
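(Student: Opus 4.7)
The plan is a two-part induction on derivations of structural equivalence. For the forward direction, I would induct on the derivation of $P\equiv Q$. Since $\xtrans\cdot$ acts as the identity on $\zero$, parallel composition, restriction, conditional, replication, and $\ok$, every axiom built only from these constructors transports verbatim: the same axiom applies to the translations, and the congruence rules close under contexts by the compositional definition of $\xtrans\cdot$. Alpha-equivalence lifts because the fresh name $x$ introduced in translating inputs and outputs is chosen outside all relevant free names, so a rename on a subterm of $P$ extends to a rename on its translation, and the target-level binders $\res x$ and the occurrences of $x$ may themselves be consistently alpha-renamed.

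For the converse direction, the essential claim is that the image of $\xtrans\cdot$ is closed under $\equiv$: whenever $\xtrans P\equiv T$ there exists $P'\equiv P$ with $T=\xtrans{P'}$ (up to alpha-renaming of the injected fresh names). I would induct on the derivation of $\xtrans P\equiv T$ together with a case analysis on the outermost shape of $\xtrans P$. Each axiom constrains this shape; for instance, parallel associativity only fires when $\xtrans P$ is itself a parallel composition, which, by the homomorphic definition of $\xtrans\cdot$, forces $P$ to be a parallel composition, and the witness $P'$ is obtained by applying the corresponding source-level axiom. Axioms acting on input or output subterms are handled by inspecting the characteristic translated shapes $\iap s {x\bullet p,\wt p}.(\oap x {x}\bnf \xtrans P)$ and $\res x(\oap s {x\bullet t,\wt t}\bnf \iap x {\pro x}.\xtrans Q)$, neither of which arises from any other clause of $\xtrans\cdot$, so any axiom touching them is either alpha-renaming (mirrored at source, possibly after a rename of $x$) or congruence descent into $\xtrans P$ and $\xtrans Q$.

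The main obstacle is the interaction between scope-extrusion and the fresh restriction $\res x$ inserted by translated outputs, since scope extrusion could in principle migrate $\res x$ outside its local translated block and expose a target process not obviously in the image of $\xtrans\cdot$. This is controlled by the observation that both occurrences of $x$ lie inside that single block, so extrusion either is blocked by the freshness side-condition or, when permitted, yields another process still in the image up to alpha-renaming of $x$. Once this case is dispatched, accumulating the mirrored source-level axioms delivers the required $P'\equiv P$ with $T=\xtrans{P'}$.
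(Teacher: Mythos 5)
Your forward direction is essentially the paper's (one-sentence) argument: $\equiv$ only manipulates operators on which $\xtrans\cdot$ acts homomorphically, so each axiom and congruence rule transports under induction on the derivation. You go further than the paper in observing that the output clause is \emph{not} homomorphic --- it injects the block $\res x(\oap s {x\bullet t,\wt t}\bnf \iap x {\pro x}.\xtrans Q)$ --- and that this is precisely where the converse direction needs care (scope extrusion of the inserted $\res x$); the paper's proof silently elides this point, so your treatment is the more honest one.

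That said, the converse as you formulate it --- literal closure of the image of $\xtrans\cdot$ under $\equiv$, up to alpha-renaming of the injected $x$ --- is false, and scope extrusion is not the only offender. Commuting the inner parallel composition gives $\res x(\oap s {x\bullet t,\wt t}\bnf \iap x {\pro x}.\xtrans Q)\equiv\res x(\iap x {\pro x}.\xtrans Q\bnf \oap s {x\bullet t,\wt t})$, and the right-hand side is not $\xtrans{P'}$ for any $P'$; likewise $\oap x x\bnf\zero\equiv\oap x x$ when a translated input has continuation $\zero$. Alpha-renaming cannot absorb these. The lemma as stated sidesteps the issue because its hypothesis places \emph{both} sides in the image ($\xtrans P\equiv\xtrans Q$), whereas your reformulation quantifies over arbitrary targets $T$; for that stronger claim the conclusion must be weakened to $T\equiv\xtrans{P'}$, or to syntactic equality modulo rearrangement inside the translated input/output blocks. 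The fix is routine, but it matters: the downstream use in the reduction-correspondence lemma is exactly the situation where an arbitrary $\equiv$-rearranged target process arises, so the weakened form is what should be stated and proved.
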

\begin{proof}
Straightforward, from the fact that $\equiv$ acts only on operators that
$\xtrans\cdot$ translates homomorphically.
\end{proof}

\begin{lemma}
\label{minussynch-red}
The translation $\xtrans\cdot$ from $\Lang_{S,\beta,\gamma,-}$ into $\Lang_{A,\beta,\gamma,I}$ preserves
and reflects reductions. That is:
\begin{enumerate}
\item If $P\redar P'$ then there exists $Q$ such that $\xtrans P \redar\redar Q$ and $Q\equiv\xtrans{P'}$;
\item if $\xtrans P\redar Q$ then there exists $Q'$ such that $Q\redar Q'$ and $Q'\equiv\xtrans {P'}$
for some $P'$ such that $P\redar P'$.
\end{enumerate}
\end{lemma}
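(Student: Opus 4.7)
The plan is to prove both items by reducing to what happens at the single communication redex, since $\xtrans\cdot$ is homomorphic on $\zero$, parallel composition, restriction, conditional, and replication. Consequently, the surrounding context cases of the induction go through transparently, using Lemma~\ref{lem:minussynch-equiv} to handle structural congruence, and Lemma~\ref{lem:minussynch-match} to locate the source redex.

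For item (1), proceed by induction on the derivation of $P \redar P'$. The only substantive case is the interaction axiom, so assume up to $\equiv$ that $P = \oap s {\wt t}.P_1 \bnf \iap s {\wt p}.P_2$ with $\pmtch(\wt t;\wt p) = \sigma$ and $P' = P_1 \bnf \sigma P_2$. By alpha-conversion, take the bookkeeping name $x$ of the translated output and the binding name $x'$ of the translated input to be fresh and distinct. Lemma~\ref{lem:minussynch-match} guarantees that the translated pair communicates, and a direct calculation with the poly-match rule shows that the target match delivers the substitution $\{x/x'\} \uplus \sigma$, producing
\[ \xtrans P \;\redar\; \res x\bigl( \iap x {\pro x}.\xtrans{P_1} \bnf \oap x x \bnf \sigma \xtrans{P_2}\bigr). \]
The only available next step is the acknowledgment on channel $x$ (the name-match of $x$ against $\pro x$ succeeds with the empty substitution); firing it yields $\res x(\xtrans{P_1} \bnf \sigma \xtrans{P_2})$. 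Because $x$ is fresh for both components, the structural laws of Figure~\ref{fig:se} reduce this to $\xtrans{P_1} \bnf \sigma \xtrans{P_2}$, and since $x$ does not occur in the source, $\sigma$ commutes with $\xtrans\cdot$, giving $\sigma \xtrans{P_2} = \xtrans{\sigma P_2}$ and hence $Q \equiv \xtrans{P'}$.

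For item (2), analyse the single target reduction $\xtrans P \redar Q$. Because $\xtrans\cdot$ is built homomorphically, the fired redex must be either a main $s$-communication between the translations of a matching source output/input pair, or an acknowledgment on some restricted bookkeeping channel $x$. The second possibility is ruled out: in $\xtrans P$ every acknowledgment input $\iap x {\pro x}.\xtrans{Q_0}$ sits under a $\res x$ whose only sibling is the guarded output $\oap s {x\bullet t, \wt t}$, and the fresh $x$ occurs in no other output of $\xtrans P$, so no $\oap x x$ is available until a main step has actually fired. Hence the first target step is a main $s$-step; by Lemma~\ref{lem:minussynch-match} it corresponds to a source redex in $P$ reducing to some $P'$, and $Q$ is exactly the intermediate configuration from item~(1). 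A single further acknowledgment step then reaches a $Q' \equiv \xtrans{P'}$.

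The delicate point is the no-spurious-first-step argument in item (2): it rests on the systematic choice of $x$ fresh in every clause of $\xtrans\cdot$ (secured by the paper's alpha-conversion convention) and on the observation that source-level substitutions, arriving via matching, never expose a bookkeeping name, so acknowledgment pairs can only meet after their triggering main step. Once this invariant is in place, the remaining reasoning is a mechanical inspection of the poly-match rule together with the structural laws of Figure~\ref{fig:se}.
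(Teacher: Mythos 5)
Your proof is correct and follows essentially the same route as the paper's: induction on the reduction judgement, with the communication axiom as the substantive base case handled via Lemma~\ref{lem:minussynch-match} and the structural cases via Lemma~\ref{lem:minussynch-equiv}. The paper's version is far terser; your explicit computation of the target substitution $\{x/x'\}\uplus\sigma$ and the argument that no acknowledgment step can fire before a main step are exactly the details the paper leaves implicit.
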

\begin{proof}
Both parts can be proved by straightforward induction on the judgements $P\redar P'$
and $\xtrans P \redar Q$, respectively. In both cases, the base step is the most interesting
and follows from Lemma~\ref{lem:minussynch-match}, for the second case the step $Q\redar Q'$
is ensured by the definition of the translation and match rule.
The inductive cases where the last rule used is a structural one then rely on
Lemma~\ref{lem:minussynch-equiv}.
\end{proof}

\begin{theorem}
\label{thm:synch}
For every language $\Lang_{S\beta,\gamma,-}$ there is a valid encoding into $\Lang_{A,\beta,\gamma,I}$.
\end{theorem}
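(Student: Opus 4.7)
The plan is to verify the five clauses of Definition~\ref{def:ve} for the encoding $\xtrans\cdot$, taking the renaming policy $\renpol$ to be the identity. Compositionality is immediate from the defining clauses of $\xtrans\cdot$: every operator other than input and output is translated homomorphically, and the contexts for input and output can be read directly off their defining equations, with the auxiliary name $x$ chosen outside $N$. Name invariance follows because the only new names introduced by $\xtrans\cdot$ are bound fresh ones and can always be chosen outside the codomain of any substitution; for injective $\sigma$ one gets $\xtrans{\sigma S} = \sigma\,\xtrans S$ on the nose, while for non-injective $\sigma$ the equality is up to $\beq_2$ because the fresh names remain private and cannot influence observable behaviour.

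For operational correspondence I would lean on Lemma~\ref{minussynch-red}. The first (``completeness'') clause is obtained by iterating part~(1) along $S\Redar_1 S'$. The second (``soundness'') clause is proved by induction on the length of the derivation $\xtrans S \Redar_2 T$: using part~(2), a single initial step $\xtrans S\redar_2 Q$ can always be extended by one more step to $Q\redar_2 Q'\equiv\xtrans{S_1}$ for some $S\redar_1 S_1$, and the induction hypothesis applied to the tail starting from $\xtrans{S_1}$ yields the required $S'$. The delicate point is that $T$ itself need not be of the form $\xtrans{S_1}$: the target trace may stop halfway through a two-step interaction, leaving a pending acknowledgment unfired. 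One therefore argues that this pending acknowledgment, which is an exchange on a restricted fresh name $x$, can always be completed and commutes with every other independent reduction; hence $T$ can be extended along $\Redar_2$ to a translated process, which is exactly what the definition requires (up to $\beq_2$).

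Divergence reflection follows from the same pairing observation: in any infinite target trace $\xtrans S\redar_2^\omega$, each acknowledgment step must be preceded by the ``main'' step that created the pending $\oap x x$ and $\iap x {\pro x}$; since the supply of pending acknowledgments is always finite, infinitely many main steps must occur, and each one witnesses, by Lemma~\ref{minussynch-red}(2), a source reduction $S\redar_1 S_i$, producing the required $S\redar_1^\omega$. Success sensitiveness is immediate because $\xtrans\cdot$ is the identity on $\ok$ and homomorphic on parallel composition and restriction, combined with the operational correspondence just established. The main obstacle throughout is the soundness half of operational correspondence, precisely because the translation is not ``one-to-one'' on reductions: the target takes two steps per source step and may stop between them, so recovering a translated process from an arbitrary reachable $T$ requires a confluence-style argument driven by the privacy of the fresh name $x$.
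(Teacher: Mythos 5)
Your proposal is correct and follows essentially the same route as the paper: compositionality and name invariance by construction, operational correspondence and divergence reflection via Lemma~\ref{minussynch-red}, and success sensitiveness by iterating that lemma together with Lemma~\ref{lem:minussynch-equiv}. The extra care you take over the soundness half (extending a trace that stops between the main step and the acknowledgment on the restricted name $x$) is exactly the content that the paper delegates to clause (2) of Lemma~\ref{minussynch-red} without spelling it out.
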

\begin{proof}
Compositionality and name invariance hold by construction.
Operational correspondence (with structural equivalence in the place of $\beq$)
and divergence reflection follow from Lemma~\ref{minussynch-red}.
Success sensitiveness can be proved as follows: $P\suc$ means that there exists $P'$ and
$k\geq 0$ such that $P\redar^k P'\equiv P''\bnf \ok$; by exploiting Lemma~\ref{minussynch-red}
$k$ times and Lemma~\ref{lem:minussynch-equiv} obtain that
$\xtrans P \redar^{2 k} \xtrans {P'}\equiv \xtrans {P''}\bnf \ok$, i.e.~that $\xtrans P \suc$.
The converse implication can be proved similarly.
\end{proof}

\section{Arity in Intensionality}
\label{sec:arity}

This section proves that intensionality is sufficient to encode polyadicity.
That is, that any language $\Lang_{\alpha,P,\gamma,-}$ can be encoded into $\Lang_{\alpha,M,\gamma,I}$.

The key to these encodings is the translation of the polyadic input and output forms
into a single pattern or term, respectively.
The translation $\encode\cdot$ is the identity on all forms except the input and output,
which exploit a single reserved name \rn\ (note that such a reserved name can be ensured
by the renaming policy \cite{G:CONCUR08,GivenWilsonGorlaJay10}) and are translated as follows:
\begin{eqnarray*}
\xtrans{\iap s {p_1,\ldots,p_i}.P} &\define&
\iap s {(\ldots(\pro\rn\bullet p_1)\bullet\ldots)\bullet p_i} .\xtrans P\\
\xtrans{\oap s {t_1,\ldots,t_i}.Q} &\define &
\oap s {(\ldots(\rn\bullet t_1)\bullet\ldots)\bullet t_i} .\xtrans Q\; .
\end{eqnarray*}
Where
the $Q$s are omitted in the asynchronous case, and
the $s$'s are omitted in the dataspace-based communication case.

\begin{lemma}
\label{lem:minuspoly-match}
Given a polyadic input $P$ 
and a polyadic output $Q$ 
then $\xtrans P \bnf \xtrans Q \redar$ if and only if $P\bnf Q \redar$.
\end{lemma}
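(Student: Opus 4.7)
The plan is to reduce the reduction question for the translated pair to a single compound match, and then to relate that single match to the original poly-match. Since $\xtrans{\cdot}$ is the identity on channels and on continuations, the only substantive change between $P\bnf Q$ and $\xtrans{P}\bnf \xtrans{Q}$ is that polyadic tuple communication is replaced by a monadic communication of a single left-nested compound. In the channel-based setting the subject $s$ is copied verbatim, so channel equality holds for the translations exactly when it holds for the originals; in the dataspace-based setting the point is moot.

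The core step, which I would prove by induction on the input arity $i$, is the equality
\[
\match{(\ldots(\rn\bullet t_1)\bullet\ldots)\bullet t_i}{(\ldots(\pro\rn\bullet p_1)\bullet\ldots)\bullet p_i} \;=\; \pmtch(t_1,\ldots,t_i;\, p_1,\ldots,p_i)
\]
with both sides defined simultaneously. The base $i=1$ unfolds the compound-match axiom once, uses $\match{\rn}{\pro\rn} = \{\}$, and reduces to $\match{t_1}{p_1}$, which is the singleton poly-match. The inductive step peels the outermost $\bullet$, splitting the match into $\match{t_i}{p_i}$ and an inner match handled by induction; the definition of $\pmtch$ as a disjoint union then reassembles the result exactly.

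The main obstacle, and the reason for the reserved name $\rn$, is the arity-mismatch case: if the polyadic input has arity $i$ and the polyadic output has arity $j \neq i$, I have to argue that the translated compounds $(\ldots(\rn\bullet t_1)\bullet\ldots)\bullet t_j$ and $(\ldots(\pro\rn\bullet p_1)\bullet\ldots)\bullet p_i$ cannot match. Stripping outer $\bullet$s in lockstep via the compound axiom, one side runs out first, leaving the bare term $\rn$ or the bare pattern $\pro\rn$ to be matched against a proper compound that still contains the other marker; no match axiom applies to a name-against-compound or name-match-against-compound pair, so the match is undefined. The renaming-policy guarantee that $\rn$ never occurs in source terms or patterns is essential here, otherwise an accidental occurrence of $\rn$ deeper in some $t_k$ or $p_k$ could create a spurious match at an unintended depth. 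Combining the two observations gives the stated iff.
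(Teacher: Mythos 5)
Your proof is correct and follows essentially the same route as the paper's (induction on the arity of the input, relating the single compound match of the translated spine to the poly-match of the tuple), with the arity-mismatch analysis being exactly the content the paper leaves implicit. One small quibble: the danger you attribute to an unreserved $\rn$ is misdiagnosed --- the match is a deterministic structural descent, so an occurrence of $\rn$ inside some $t_k$ or $p_k$ off the left spine cannot cause a match ``at an unintended depth''; the real reason $\pro\rn$ must anchor the pattern spine is that otherwise a low-arity binding pattern (e.g.\ a bare $x$ for arity one) would absorb a whole higher-arity compound.
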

\begin{proof}
The proof is by induction on the arity of the polyadic input.
\end{proof}

\begin{lemma}
\label{lem:minuspoly-equiv}
If $P\equiv Q$ then $\xtrans P \equiv \xtrans Q$.
Conversely, if $\xtrans P \equiv \xtrans Q$ then $Q=\xtrans {P'}$ for some $P'\equiv P$.
\end{lemma}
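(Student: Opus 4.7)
The plan is to mirror the strategy used for Lemma \ref{lem:minussynch-equiv}, since the translation $\xtrans\cdot$ in this section has the same shape: it is the identity on $\zero$, parallel composition, restriction, conditional, replication, $\ok$, and on $\alpha$-conversion of bound names. The axioms of $\equiv$ in Figure~\ref{fig:se} only act on exactly these operators---no axiom inspects or rewrites an input or output prefix---so the translation commutes with every rewrite of $\equiv$.

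For the forward direction, I would do a straightforward induction on the derivation of $P\equiv Q$. Each axiom instance on $P\equiv Q$ lifts directly to the same axiom instance on $\xtrans P\equiv \xtrans Q$ because $\xtrans\cdot$ distributes homomorphically over $\bnf$, $\res{}{}$, $*$, and the conditional; the congruence rules carry through by the inductive hypothesis applied under these homomorphic contexts. The $\alpha$-conversion case is handled by noting that the translation replaces each polyadic input pattern $p_1,\ldots,p_i$ by the single compound $(\ldots(\pro\rn\bullet p_1)\bullet\ldots)\bullet p_i$ whose binding names are exactly those of $p_1,\ldots,p_i$; the reserved name $\rn$ plays no binding role (it is name-matched, and is guaranteed disjoint from source names by the renaming policy). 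Hence any renaming of source-level binders induces a matching renaming on the target-level binders.

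For the converse, I would proceed by induction on the derivation of $\xtrans P\equiv \xtrans Q$, observing that the image of $\xtrans\cdot$ has a rigid shape: every prefix in a translated process is either an input $\iap s {(\ldots(\pro\rn\bullet p_1)\bullet\ldots)\bullet p_i}.R$ or an output $\oap s {(\ldots(\rn\bullet t_1)\bullet\ldots)\bullet t_i}.R$ whose continuation $R$ is again a translated process. Since no axiom of $\equiv$ can create, destroy, or reshape such prefixes, every intermediate term in the derivation is itself in the image of $\xtrans\cdot$. Hence $Q=\xtrans{P'}$ for some $P'$, and the same derivation, read back through the (injective on translated terms) homomorphism, witnesses $P\equiv P'$.

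The main obstacle I anticipate is making the ``same rigid shape'' argument in the converse precise: formally, one must check that if $\xtrans P = C[\pi]$ for some non-trivial context $C$ and prefix $\pi$, then any $Q\equiv \xtrans P$ decomposes as $C'[\pi']$ where $C'$ is structurally equivalent to $C$ under the source translation and $\pi'$ is the translation of the corresponding source prefix. This reduces to a routine syntactic check on the shape of translated prefixes and the fact that no axiom of $\equiv$ can introduce a $\rn$-headed compound where none existed before.
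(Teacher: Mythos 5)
Your proposal is correct and follows essentially the same route as the paper, which simply observes that $\equiv$ acts only on operators that $\xtrans\cdot$ translates homomorphically; you have expanded that one-line argument into the explicit induction on derivations (including the $\alpha$-conversion case and the rigid-shape argument for the converse) that the paper leaves implicit.
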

\begin{proof}
Straightforward, from the fact that $\equiv$ acts only on operators that
$\xtrans\cdot$ translates homomorphically.
\end{proof}

\begin{lemma}
\label{minuspoly-red}
The translation $\xtrans\cdot$ from $\Lang_{\alpha,P,\gamma,-}$ into $\Lang_{\alpha,M,\gamma,I}$ preserves
and reflects reductions. That is:
\begin{enumerate}
\item If $P\redar P'$ then $\xtrans P \redar \xtrans{P'}$;
\item if $\xtrans P\redar Q$ then $Q=\xtrans {P'}$ for some $P'$ such that $P\redar P'$.
\end{enumerate}
\end{lemma}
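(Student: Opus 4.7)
The plan is to prove both parts by induction on the derivation of the relevant reduction, leveraging Lemma~\ref{lem:minuspoly-match} for the base case and Lemma~\ref{lem:minuspoly-equiv} for the structural-congruence case. The key observation that makes everything clean is that $\xtrans\cdot$ is homomorphic on every process former other than input and output, so the only interesting step is the interaction axiom.

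For part~1, proceed by induction on the derivation of $P\redar P'$. In the base case, $P \equiv \oap s {\wt t}.R_1\bnf \iap s {\wt p}.R_2$ reduces to $P' \equiv R_1 \bnf \sigma R_2$ where $\pmtch(\wt t;\wt p)=\sigma$. Lemma~\ref{lem:minuspoly-match} gives that $\xtrans P \redar$; tracing through the single match of the left-associated compound $(\ldots(\rn\bullet t_1)\bullet\ldots)\bullet t_i$ against $(\ldots(\pro\rn\bullet p_1)\bullet\ldots)\bullet p_i$ shows that the resulting substitution is exactly $\sigma$ (the innermost $\pro\rn$ against $\rn$ contributes the empty substitution, and each $p_j$ against $t_j$ contributes what $\pmtch$ would). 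Hence $\xtrans P \redar \xtrans{R_1} \bnf \sigma\xtrans{R_2}$. To identify this with $\xtrans{P'}$, one needs $\sigma\xtrans{R_2}=\xtrans{\sigma R_2}$, which is a routine induction on $R_2$ using that $\rn$ is reserved (so $\sigma$ never touches it) and that $\xtrans\cdot$ only wraps inputs and outputs. The inductive cases for parallel composition and restriction are immediate from the homomorphic nature of the translation on those operators; the structural-equivalence case is discharged by Lemma~\ref{lem:minuspoly-equiv}.

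For part~2, proceed by induction on the derivation of $\xtrans P \redar Q$. Again the only nontrivial step is the axiom: the redex in $\xtrans P$ must consist of a translated output $\oap s {(\ldots(\rn\bullet t_1)\bullet\ldots)\bullet t_i}.T_1$ meeting a translated input $\iap s {(\ldots(\pro\rn\bullet p_1)\bullet\ldots)\bullet p_i}.T_2$, because the translation is homomorphic elsewhere and the reserved name $\rn$ cannot appear in any source-language term. The compound match forces the two arities to agree, and the produced substitution is exactly $\pmtch(\wt t;\wt p)$ for the underlying polyadic redex in $P$. Thus $Q$ has the form $\xtrans{R_1}\bnf\sigma\xtrans{R_2}$, which by the same substitution/commutation argument as above equals $\xtrans{R_1\bnf\sigma R_2}$, and $P$ itself reduces to $R_1\bnf\sigma R_2$, giving the desired $P'$. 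The parallel, restriction, and structural cases again pass through using Lemma~\ref{lem:minuspoly-equiv} and compositionality.

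The main subtlety to handle carefully will be in part~2: one must argue that no spurious reductions are introduced by the encoding. This boils down to two facts. First, the reservation of $\rn$ via the renaming policy guarantees that source processes do not use $\rn$, so a translated output can only ever react with a translated input of the corresponding arity, never with some other reserved structure. Second, the leftmost innermost position of $\rn$ and $\pro\rn$ in the nested compounds ensures that partial or mismatched arities fail the match rule outright (since the spines of the compound pattern and term have different depths). With these observations, the induction goes through without the need to enlarge the statement to ``up to $\beq$'' as was required in the synchronism case.
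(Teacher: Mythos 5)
Your proof follows essentially the same route as the paper's: induction on the reduction derivation, with the interaction axiom handled via Lemma~\ref{lem:minuspoly-match} and the structural-congruence cases via Lemma~\ref{lem:minuspoly-equiv}; the extra details you supply (the substitution commutation $\sigma\xtrans{R_2}=\xtrans{\sigma R_2}$ and the $\rn$-anchored arity argument ruling out spurious matches) are exactly the points the paper leaves implicit. This is correct.
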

\begin{proof}
Both parts can be proved by straightforward induction on the judgements $P\redar P'$
and $\xtrans P \redar Q$, respectively. In both cases, the base step is the most interesting
and follows from Lemma~\ref{lem:minuspoly-match}; the inductive cases where the last rule
used is a structural one then rely on Lemma~\ref{lem:minuspoly-equiv}.
\end{proof}

\begin{theorem}
\label{thm:poly}
For every language $\Lang_{\alpha,P,\gamma,-}$ there is a valid encoding into $\Lang_{\alpha,M,\gamma,I}$.
\end{theorem}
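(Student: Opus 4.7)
The plan is to follow exactly the template already established for Theorem~\ref{thm:synch}, since all five clauses of Definition~\ref{def:ve} reduce to routine consequences of the three preceding lemmas. First I would observe that \emph{compositionality} is immediate by construction: the translation $\xtrans{\cdot}$ is defined homomorphically on every operator of the source language except input and output, and for those two the target contexts $\CopN C {\text{in}} N {\_}$ and $\CopN C {\text{out}} N {\_}$ extracted from the defining equations depend only on the free names of the surrounding primitives (the reserved name $\rn$ being fixed by the renaming policy). \emph{Name invariance} with the identity renaming policy also holds by construction, since substitution commutes with the homomorphic wrapping of terms and patterns as $(\ldots(\rn\bullet t_1)\bullet\ldots)\bullet t_i$ and $(\ldots(\pro\rn\bullet p_1)\bullet\ldots)\bullet p_i$; no name-match on a bound or substituted name is introduced, and $\rn$ is never in the domain of any source-level substitution.

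Next I would derive \emph{operational correspondence} directly from Lemma~\ref{minuspoly-red}. The completeness direction (for every $S\Redar_1 S'$, $\xtrans S \Redar_2 \beq_2 \xtrans{S'}$) is obtained by iterating part~(1) of the lemma along a reduction sequence, giving even strict equality in place of $\beq_2$. The soundness direction (for every $\xtrans S \Redar_2 T$, there is $S'$ with $S \Redar_1 S'$ and $T\Redar_2 \beq_2 \xtrans{S'}$) is obtained by iterating part~(2); since each target reduction step is matched by exactly one source step leading to the image of the reduct, we may take $T = \xtrans{S'}$ with zero further steps.

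\emph{Divergence reflection} follows at once from the same lemma: an infinite reduction sequence $\xtrans S \redar_2^\omega$ is mapped backwards, step by step, to an infinite reduction sequence $S \redar_1^\omega$ via part~(2) of Lemma~\ref{minuspoly-red}. For \emph{success sensitiveness}, I would argue as in the proof of Theorem~\ref{thm:synch}: if $S\suc_1$ then $S\redar_1^k S'\equiv S''\bnf\ok$ for some $k$, and by $k$ applications of Lemma~\ref{minuspoly-red}(1) together with Lemma~\ref{lem:minuspoly-equiv} we obtain $\xtrans S \redar_2^k \xtrans{S'}\equiv \xtrans{S''}\bnf\ok$, noting that $\xtrans{\ok}=\ok$ since the translation is the identity on all non-input/output operators. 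The converse uses Lemma~\ref{minuspoly-red}(2) and Lemma~\ref{lem:minuspoly-equiv} symmetrically.

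The only mildly delicate point — and thus the place I would take most care — is verifying that the wrapping of terms in $(\ldots(\rn\bullet t_1)\bullet\ldots)\bullet t_i$ does not accidentally enable unintended matches. This is precisely what Lemma~\ref{lem:minuspoly-match} guarantees: a translated input and a translated output can interact exactly when their $\rn$-prefixes align (forcing equal arity because the nesting depth of $\bullet$ is determined by the arity) and the original term/pattern sequences match componentwise. Since this lemma has already been established, the remaining work in Theorem~\ref{thm:poly} is genuinely just bookkeeping.
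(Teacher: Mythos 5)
Your proposal is correct and follows essentially the same route as the paper: compositionality and name invariance by construction, operational correspondence and divergence reflection from Lemma~\ref{minuspoly-red}, and success sensitiveness by iterating that lemma together with Lemma~\ref{lem:minuspoly-equiv}. The only cosmetic quibble is that the renaming policy is not quite the identity, since it must reserve $\rn$ (as you yourself note when discussing compositionality), but this does not affect the argument.
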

\begin{proof}
Compositionality and name invariance hold by construction.
Operational correspondence (with structural equivalence in the place of $\beq$)
and divergence reflection follow from Lemma~\ref{minuspoly-red}.
Success sensitiveness can be proved as follows: $P\suc$ means that there exists $P'$ and
$k\geq 0$ such that $P\redar^k P'\equiv P''\bnf \ok$; by exploiting Lemma~\ref{minuspoly-red}
$k$ times and Lemma~\ref{lem:minuspoly-equiv} obtain that
$\xtrans P \redar^k \xtrans {P'}\equiv \xtrans {P''}\bnf \ok$, i.e.~that $\xtrans P \suc$.
The converse implication can be proved similarly.
\end{proof}

\section{Communication-Medium in Intensionality}
\label{sec:medium}

This section proves that intensionality is sufficient to encode channel-based communication.
That is, that any language $\Lang_{\alpha,\beta,C,-}$ can be encoded into $\Lang_{\alpha,\beta,D,I}$.

Similar to the polyadic into intensional case, the key is in the translation of the
input and output forms.
In general the translation $\xtrans\cdot$ is the identity on all forms except the input
and output which are translated as follows:
\begin{eqnarray*}
\xtrans{\iap s {p,\wt p} . P} &\define& \iap {} {\pro s\bullet p,\wt p} .\xtrans P\\
\xtrans{\oap s {t,\wt t} . Q} &\define & \oap {} {s \bullet t,\wt t} .\xtrans Q
\end{eqnarray*}
where $\wt p$ and $\wt t$ are omitted in the monadic case.
The input is translated into a pattern that compounds a name-match of the channel name with
the pattern.
The output is a simple compounding of the channel name with term
(and the $Q$s are omitted in the asynchronous case).

\begin{lemma}
\label{lem:minuschan-match}
Given a channel-based communication input $P$
and a channel-based communication output $Q$ 
then $\xtrans P \bnf \xtrans Q \redar$ if and only if $P\bnf Q \redar$.
\end{lemma}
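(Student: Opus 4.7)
My plan is to prove both directions by directly unfolding the definitions of the match and poly-match rules on the translated input/output pair, since the whole point of the translation is to transfer the channel-equality check from the reduction axiom into a name-match occurring inside the first component of the poly-match.

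For the forward direction, I assume $P\bnf Q\redar$, so by the interaction axiom the two processes must share a channel, i.e.\ $P=\iap s{p,\wt p}.P'$ and $Q=\oap s{t,\wt t}.Q'$ (possibly with $\wt p,\wt t$ empty in the monadic case), and $\pmtch(t,\wt t;p,\wt p)=\sigma$ is defined. Applying the translation gives a dataspace input with pattern $\pro s\bullet p,\wt p$ and a dataspace output with term $s\bullet t,\wt t$. I would then compute $\match{s\bullet t}{\pro s\bullet p}$: by the compound clause this reduces to $\match s{\pro s}\cup\match t p=\{\}\cup\match t p$, which is defined exactly when $\match t p$ is defined. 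Combining with the remaining (unchanged) sub-matches via the poly-match inference rule yields the same substitution~$\sigma$ as before, so $\xtrans P\bnf\xtrans Q\redar$.

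For the backward direction, I assume $\xtrans P\bnf\xtrans Q\redar$. Since the translation preserves the syntactic category of input and output, the reduction must arise from the dataspace interaction axiom applied to the translated input and output with a defined poly-match. Inverting the compound clause of the match rule on the leading pair $(s'\bullet t,\pro s\bullet p)$ forces $\match{s'}{\pro s}$ to be defined, which by the name-match clause requires $s'=s$; the remainder of the match then produces precisely $\match t p$, and the tail poly-match is identical on both sides. Hence the original channel-based interaction axiom applies to $P\bnf Q$ with the same resulting substitution, giving $P\bnf Q\redar$.

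The only mildly subtle point will be the monadic versus polyadic book-keeping (checking that omitting $\wt p,\wt t$ in the monadic case does not upset the argument) and being explicit that a name-match $\pro s$ can only be matched by the identical name $s$ under the match rule, so no spurious channel identifications are introduced by the translation; I do not expect a genuine obstacle here, since both observations are immediate from the definitions given in Section~\ref{subsec:op-sem}.
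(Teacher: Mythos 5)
Your proposal is correct and is exactly the argument the paper intends: its proof of this lemma is the one-liner ``trivial by the definition of the poly-match and match rules,'' and your unfolding of $\match{s\bullet t}{\pro s\bullet p}=\match s{\pro s}\cup\match t p$ in both directions is precisely that argument made explicit. No gaps.
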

\begin{proof}
The proof is trivial by the definition of the poly-match and match rules.
\end{proof}

\begin{lemma}
\label{lem:minuschan-equiv}
If $P\equiv Q$ then $\xtrans P \equiv \xtrans Q$.
Conversely, if $\xtrans P \equiv \xtrans Q$ then $Q=\xtrans {P'}$ for some $P'\equiv P$.
\end{lemma}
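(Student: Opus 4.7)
My plan is to prove both directions by induction, following the template of Lemmas~\ref{lem:minussynch-equiv} and~\ref{lem:minuspoly-equiv}. The key observation is that $\xtrans\cdot$ is purely homomorphic on every process-forming operator except input and output prefixes: it commutes with parallel composition, restriction, replication, conditionals, and the null process, and it does not introduce or remove any name binders (in particular, no fresh restriction is added, unlike in the synchronism encoding). Since the axioms of $\equiv$ in Figure~\ref{fig:se} act only on these homomorphically-translated constructors and on alpha-equivalence, each axiom lifts through $\xtrans\cdot$.

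For the forward direction, I would induct on the derivation of $P \equiv Q$. Each base axiom (monoid laws for $\bnf$, conditional resolution, restriction swap and scope extrusion, unfolding of $*$, and $\alpha$-conversion) applies to the translated processes because the top-level constructor is preserved and the subterms are translated in place. The side condition $a \notin {\sf fn}(P)$ for scope extrusion transfers to $\xtrans\cdot$ because the translation does not add or remove free names of the source (the channel name $s$ becomes a free name occurrence inside the translated prefix, but this is already a free name of the original prefix). The closure cases (congruence, reflexivity, symmetry, transitivity) follow immediately from the induction hypothesis.

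For the converse direction, I would induct on the structure of the target process $Q$ (or equivalently on the derivation of $\xtrans P \equiv Q$). The essential observation is that $\xtrans\cdot$ has a recognisable image: every translated input prefix has the specific shape $\iap{}{\pro s \bullet p,\wt p}.R$ and every translated output has the shape $\oap{}{s \bullet t,\wt t}.R'$, where the outer constructor reveals the original channel. Structural equivalence never rewrites prefixes, so if $Q \equiv \xtrans P$ then $Q$ must be built from the same kinds of prefixes and from the homomorphically translated operators. I would then lift each structural rearrangement of $Q$ back to the source: a parallel rearrangement of $Q$ corresponds to the same rearrangement of $P$; an alpha-conversion of $Q$ corresponds to an alpha-conversion of $P$ since no extra bound names were introduced; a scope extrusion of $Q$ corresponds to scope extrusion of $P$. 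Collecting these rearrangements gives the desired $P'$ with $Q = \xtrans{P'}$ and $P' \equiv P$.

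The only mildly delicate point — and where most of the routine work lies — is verifying that scope extrusion and alpha-conversion commute with the translation without producing any spurious name clashes; this is automatic here because $\xtrans\cdot$ does not bind any name that was not already bound in the source, so the set of bound names is preserved and the free-name side conditions transfer verbatim in both directions. With this in hand the induction closes in exactly the same shape as the two previous equivalence lemmas.
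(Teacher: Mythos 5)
Your proof is correct and follows exactly the same route as the paper, which disposes of this lemma in one line by observing that $\equiv$ only manipulates operators that $\xtrans\cdot$ translates homomorphically; your write-up simply makes explicit the induction on the derivation of $\equiv$ (forward) and the recognisability of the translation's image (converse) that this observation implicitly relies on. Your remarks that the translation introduces no fresh restrictions and preserves free names (the channel term $s$ reappearing as free names inside the pattern $\pro s$ and the term $s\bullet t$) are precisely the facts needed to transfer the side condition of scope extrusion, so nothing is missing.
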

\begin{proof}
Straightforward, from the fact that $\equiv$ acts only on operators that
$\xtrans\cdot$ translates homomorphically.
\end{proof}

\begin{lemma}
\label{minuschan-red}
The translation $\xtrans\cdot$ from $\Lang_{\alpha,\beta,C,-}$ into $\Lang_{\alpha,\beta,D,I}$ preserves
and reflects reductions. That is:
\begin{enumerate}
\item If $P\redar P'$ then $\xtrans P \redar \xtrans{P'}$;
\item if $\xtrans P\redar Q$ then $Q=\xtrans {P'}$ for some $P'$ such that $P\redar P'$.
\end{enumerate}
\end{lemma}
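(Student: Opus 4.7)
The proof follows exactly the same pattern as Lemma~\ref{minuspoly-red}, so the plan is to mirror that structure, exploiting Lemma~\ref{lem:minuschan-match} as the base case and Lemma~\ref{lem:minuschan-equiv} for handling structural equivalence in the inductive cases.

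For part (1), I would proceed by induction on the derivation of $P \redar P'$. In the base case $P = \oap s {t,\wt t}.Q_1 \bnf \iap s {p,\wt p}.Q_2$ with $\pmtch(t,\wt t; p,\wt p) = \sigma$, so $P' = Q_1 \bnf \sigma Q_2$ (with $Q_1$ omitted in the asynchronous case). Applying $\xtrans \cdot$ gives an output $\oap {} {s\bullet t, \wt t}.\xtrans{Q_1}$ in parallel with an input $\iap {} {\pro s \bullet p, \wt p}.\xtrans{Q_2}$. Since $\match s {\pro s}$ yields the empty substitution, one checks from the definition of the poly-match rule that matching $s\bullet t, \wt t$ against $\pro s \bullet p, \wt p$ produces exactly $\sigma$; thus $\xtrans P \redar \xtrans{Q_1} \bnf \sigma \xtrans{Q_2}$. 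Because $\xtrans\cdot$ is homomorphic on all other process constructors, $\sigma \xtrans{Q_2} = \xtrans{\sigma Q_2}$, yielding $\xtrans{P'}$ as required. The inductive cases for parallel composition and restriction are immediate from the homomorphic behaviour of $\xtrans\cdot$, and the structural-equivalence case relies on Lemma~\ref{lem:minuschan-equiv}.

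For part (2), I would again induct, this time on the derivation of $\xtrans P \redar Q$. The base case has $\xtrans P$ containing an output/input pair at top level that reduces. Since $\xtrans \cdot$ is homomorphic everywhere except on input and output, and since every output/input in $\xtrans P$ is of the form $\oap {} {s\bullet t, \wt t}.\xtrans{Q_1}$ or $\iap {} {\pro s \bullet p, \wt p}.\xtrans{Q_2}$, any top-level reduction must pair such an output with such an input. Matching forces the channel $s$ of the output to coincide with the name-match $\pro s$ of the input (otherwise the outer match fails), and the remaining match produces the same substitution $\sigma$ one would obtain from $\pmtch(t,\wt t; p,\wt p)$ in the source. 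Hence $P$ itself contains the corresponding channel-based redex, reduces to some $P'$, and $Q = \xtrans{P'}$. The inductive parallel/restriction cases are again immediate, and the structural case uses Lemma~\ref{lem:minuschan-equiv} together with the fact that a $\xtrans\cdot$-image equivalent to $Q$ can be pulled back to a source process equivalent to $P$.

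The only real obstacle is the bookkeeping in the base case: verifying that the compound output $s\bullet t$ matches the compound input pattern $\pro s \bullet p$ if and only if the channels agree in the source, and that the generated substitution coincides with the source substitution. This is where the channel-equality check of the source is realised entirely by the name-match component $\pro s$ of the translated pattern, and where the empty substitution contributed by $\match s {\pro s}$ ensures that no spurious bindings are added to $\sigma$. Once this is done, all the remaining steps are routine and follow the template already used for Lemmas~\ref{minussynch-red} and~\ref{minuspoly-red}.
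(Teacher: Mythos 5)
Your proof is correct and follows exactly the route the paper takes: induction on the reduction judgements, with the base case discharged by Lemma~\ref{lem:minuschan-match} and the structural cases by Lemma~\ref{lem:minuschan-equiv}; you merely spell out the base-case bookkeeping (that $\match{s\bullet t}{\pro{s}\bullet p}$ succeeds exactly when the channels agree and contributes no extra bindings) that the paper leaves implicit. No gaps.
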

\begin{proof}
Both parts can be proved by straightforward induction on the judgements $P\redar P'$
and $\xtrans P \redar Q$, respectively. In both cases, the base step is the most interesting
and follows from Lemma~\ref{lem:minuschan-match}; the inductive cases where the last rule
used is a structural one then rely on Lemma~\ref{lem:minuschan-equiv}.
\end{proof}

\begin{theorem}
\label{thm:chan}
For every language $\Lang_{\alpha,\beta,C,-}$ there is a valid encoding into $\Lang_{\alpha,\beta,D,I}$.
\end{theorem}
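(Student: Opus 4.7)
The plan is to follow the same five-point template used in Theorems~\ref{thm:synch} and~\ref{thm:poly}, since the translation $\xtrans{\cdot}$ here is structurally analogous and all the substantive work has already been packaged into Lemmas~\ref{lem:minuschan-match}, \ref{lem:minuschan-equiv}, and~\ref{minuschan-red}. The renaming policy $\renpol$ is the identity (each source name maps to itself as a singleton), which will simplify the name-invariance clause considerably.

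First, I would dispatch \emph{compositionality}: the translation is homomorphic on every operator other than input and output, and for input and output the required $k$-ary contexts $\CopN{C}{\op}{N}{\_\,_1;\ldots;\_\,_k}$ can be read directly off the two defining equations of $\xtrans{\cdot}$ (with the continuation slot filled by $\_\,_1$). \emph{Name invariance} holds because $\xtrans{\cdot}$ commutes with name substitution on terms and patterns: applying $\sigma$ either before or after the translation produces the same process, since the translation only rearranges occurrences of $s$ and $t$ into compounds and name-matches $\pro{s}$, and substitution passes through both compounds and name-matches by definition. Hence $\xtrans{\sigma S} = \sigma \xtrans{S}$ for every $\sigma$, which is stronger than required.

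For \emph{operational correspondence}, I take $\beq_2$ to be structural equivalence. The preservation direction $S \Redar_1 S' \Rightarrow \xtrans{S} \Redar_2 \beq_2 \xtrans{S'}$ follows by iterating part~(1) of Lemma~\ref{minuschan-red}; the reflection direction follows by iterating part~(2), noting that every reduction of $\xtrans{S}$ is mirrored by exactly one reduction of $S$ (no administrative steps are inserted by this translation, unlike the synchronism case). \emph{Divergence reflection} is then immediate: an infinite reduction chain $\xtrans{S} \redar_2^\omega$ is, by repeated application of part~(2), matched step-for-step by an infinite chain $S \redar_1^\omega$.

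Finally, \emph{success sensitiveness} follows the same pattern as in Theorem~\ref{thm:poly}: if $S \suc_1$, there exist $k \geq 0$ and $S'$ with $S \redar_1^k S' \equiv S'' \bnf \ok$; applying Lemma~\ref{minuschan-red} $k$ times and then Lemma~\ref{lem:minuschan-equiv} yields $\xtrans{S} \redar_2^k \xtrans{S'} \equiv \xtrans{S''} \bnf \ok$, using that $\xtrans{\ok} = \ok$ since $\ok$ is a primitive outside the input/output clauses. The converse implication uses the reflection direction of Lemma~\ref{minuschan-red} together with the second half of Lemma~\ref{lem:minuschan-equiv}. I do not expect a genuine obstacle here; the one point worth being careful about is that the name-match $\pro{s}$ on an arbitrary term is well-defined via the recursive definition $\pro{(s\bullet t)} \define \pro{s}\bullet\pro{t}$ given earlier, which guarantees that $\xtrans{\iap{s}{p,\wt{p}}.P}$ is a well-formed intensional input even when $s$ is itself a compound.
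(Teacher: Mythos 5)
Your proposal is correct and follows essentially the same route as the paper's own proof: compositionality and name invariance by construction of the translation, operational correspondence (with structural equivalence in place of the behavioural equivalence) and divergence reflection from Lemma~\ref{minuschan-red}, and success sensitiveness by iterating that lemma together with Lemma~\ref{lem:minuschan-equiv}. The extra details you supply (the identity renaming policy, the absence of administrative steps, and the well-definedness of the name-match on compound channel terms) are all consistent with, and merely elaborate on, the paper's argument.
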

\begin{proof}
Compositionality and name invariance hold by construction.
Operational correspondence (with structural equivalence in the place of $\beq$)
and divergence reflection follow from Lemma~\ref{minuschan-red}.
Success sensitiveness can be proved as follows: $P\suc$ means that there exists $P'$ and
$k\geq 0$ such that $P\redar^k P'\equiv P''\bnf \ok$; by exploiting Lemma~\ref{minuschan-red}
$k$ times and Lemma~\ref{lem:minuschan-equiv} obtain that
$\xtrans P \redar^k \xtrans {P'}\equiv \xtrans {P''}\bnf \ok$, i.e.~that $\xtrans P \suc$.
The converse implication can be proved similarly.
\end{proof}

\medskip

This concludes proving that intensionality can encode: synchronicity, polyadicity,
channel-based communication, and name-matching.
Thus any language $\Lang_{\alpha,\beta,\gamma,\delta}$ can be encoded into
$\Lang_{A,M,D,I}$, and so all the intensional languages can encode each other, and
thus are equally expressive.

\begin{theorem}
\label{thm:int-is-king}
Any language $\Lang_{\alpha 1,\beta 1,\gamma 1,\delta 1}$ can be encoded into any language
$\Lang_{\alpha 2,\beta 2,\gamma 2,I}$. That is, intensionality alone is sufficient to encode: synchronicity,
polyadicity, channel-based communication, and pattern-matching.
\end{theorem}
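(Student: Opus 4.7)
The plan is to obtain the encoding by composing the three encodings already established (Theorems~\ref{thm:synch}, \ref{thm:poly}, \ref{thm:chan}) with the trivial embeddings given by Remark~\ref{rem:leq}. Concretely, I would funnel every source language through the minimal intensional language $\lamdi$ and then push out into the arbitrary intensional target, so the structure is
\begin{equation*}
\Lang_{\alpha_1,\beta_1,\gamma_1,\delta_1}\ \longrightarrow\ \Lang_{\alpha_1,\beta_1,\gamma_1,I}\ \longrightarrow\ \Lang_{\alpha_1,\beta_1,D,I}\ \longrightarrow\ \Lang_{\alpha_1,M,D,I}\ \longrightarrow\ \lamdi\ \longrightarrow\ \Lang_{\alpha_2,\beta_2,\gamma_2,I}.
\end{equation*}

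The first arrow handles the pattern-matching feature: since $\mathit{NO}\leq \mathit{NM}\leq I$, Remark~\ref{rem:leq} gives the identity as a valid encoding into the intensional variant (a no-matching input is an intensional input whose pattern is just a binding name, and a name-matching input uses only $x$ and $\pro a$ patterns). The next three arrows are simply Theorems~\ref{thm:chan}, \ref{thm:poly}, and~\ref{thm:synch} applied in order, each of which is applicable because at every intermediate stage intensionality has already been secured. The final arrow uses Remark~\ref{rem:leq} once more, since $A\leq\alpha_2$, $M\leq\beta_2$ and $D\leq\gamma_2$, to embed $\lamdi$ into the arbitrary intensional target. I would note that the stages corresponding to features the source does not possess can simply be skipped (taking the identity encoding), so the construction degrades gracefully.

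The only real obstacle is verifying that the composition of valid encodings is itself a valid encoding. I would dispatch this with a short, once-and-for-all lemma: given valid encodings $(\encode{\cdot}_1,\renpol_1)\colon \Lang\to\Lang'$ and $(\encode{\cdot}_2,\renpol_2)\colon \Lang'\to\Lang''$, the pair $(\encode{\encode{\cdot}_1}_2,\renpol_2\circ\renpol_1)$ is valid. Compositionality composes because plugging a $k$-ary context into $k$-ary contexts yields a $k$-ary context; name invariance composes using the factorisation of substitutions through $\renpol_1$ and $\renpol_2$, together with the fact that $\beq$ is preserved by $\encode{\cdot}_2$ (which follows from operational correspondence and success sensitiveness, cf.~\cite{G:CONCUR08}); operational correspondence composes by transitivity of $\Redar$ and a short diagram chase using both soundness and completeness of the two stages; divergence reflection composes contrapositively, since an infinite reduction in the composite target lifts through stage~2 to either an infinite reduction already in $\Lang'$ (handled by stage~1) or a contradiction with divergence reflection of stage~2; and success sensitiveness is preserved as an iff at each stage.

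With this composability in hand, the theorem follows immediately by chaining the five arrows above. I would close by remarking that this also establishes that all eight intensional languages are interencodable and, combined with the impossibility results of Section~\ref{sec:impossible}, that they form a strict upper stratum above all sixteen non-intensional languages.
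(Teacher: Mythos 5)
Your proposal is correct and takes essentially the same approach as the paper: the paper's proof likewise dispatches each feature either by the trivial embedding of Remark~\ref{rem:leq} or by the corresponding Theorem~\ref{thm:synch}, \ref{thm:poly}, or \ref{thm:chan}, and chains the results. Your explicit routing through $\lamdi$ and your composition-of-valid-encodings lemma merely spell out what the paper leaves implicit (closure of valid encodings under composition is invoked by the paper itself in the proof of Theorem~\ref{thm:no-int-any-2}).
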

\begin{proof}
When $\alpha 1 \leq \alpha 2$ then trivial by Remark~\ref{rem:leq}, otherwise use Theorem~\ref{thm:synch}.
When $\beta 1 \leq \beta 2$ then trivial by Remark~\ref{rem:leq}, otherwise use Theorem~\ref{thm:poly}.
When $\gamma 1 \leq \gamma 2$ then trivial by Remark~\ref{rem:leq}, otherwise use Theorem~\ref{thm:chan}.
Finally, observe that $\delta 1\leq \delta 2$ always holds and is trivial by Remark~\ref{rem:leq}.
\end{proof}

\section{Impossible Encodings}
\label{sec:impossible}

\newcommand{\shortproof}[2]{#1} 

This section considers the impossibility of encoding intensionality with any combination of
other properties. That is, that any language $\Lang_{\alpha1,\beta1,\gamma1,I}$ cannot be
encoded into $\Lang_{\alpha2,\beta2,\gamma2,\delta}$ where $\delta\leq \mathit{NM}$.
The key to the proof is to exploit the contractive nature of intensionality;
that an arbitrarily large term can be bound to a single name, and also the
possibility to match an infinite number of names in a single interaction.
These two properties can be exploited to show that any attempt at encoding yields contradiction.

\newcommand{\theproof}{
\shortproof
{\begin{theorem}
\label{thm:no-int-any-2}
There is no valid encoding of any language $\Lang_{-,-,-,I}$ into $\Lang_{-,-,-,\delta}$
where $\delta \leq \mathit{NM}$.
\end{theorem}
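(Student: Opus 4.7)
The plan is a proof by contradiction that mirrors the sketch given in the introduction. Assume a valid encoding $(\encode\cdot,\renpol)$ from some intensional $\Lang_{-,-,-,I}$ into a target $\Lang_{-,-,-,\delta}$ with $\delta\leq \mathit{NM}$, and exploit two source-level contractive capabilities that the target cannot match: any compound of arbitrary depth can be absorbed by a single binding name, and any finite list of names can be simultaneously tested for equality in one interaction. The target, by contrast, only reduces between a single output primitive and a single input primitive of some fixed arity determined syntactically by the redex.

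The first step is to select minimal probes. Take $S_0 \define \iap{} x.\ok$ and $S_1 \define \oap{} a$ (adding a shared channel if the feature tuple demands it). Success sensitiveness and operational correspondence force $\encode{S_0 \bnf S_1}\Redar T$ with $T \suc_2$, and compositionality pins the actual reduction onto a redex between a derivative of $\encode{S_0}$ and a derivative of $\encode{S_1}$; record its arity $i$ (the length of the communicated tuple, counting $1$ in the monadic case). Then, with $n>i$ pairwise fresh names $b_1,\ldots,b_n$, build
\begin{equation*}
S_2 \define \oap{}{b_1\bullet\cdots\bullet b_n},\qquad
S_3 \define \iap{}{\pro{b_1}\bullet\cdots\bullet\pro{b_n}}.\ok\; .
\end{equation*}
These satisfy $S_0\bnf S_2 \redar \{b_1\bullet\cdots\bullet b_n/x\}\ok$ and $S_2\bnf S_3 \redar \ok$, so by operational correspondence $\encode{S_0\bnf S_2}$ and $\encode{S_2\bnf S_3}$ each exhibit a successful target reduction, at arities that I shall denote $j$ and $k$ respectively.

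The analysis now splits on $(i,j,k)$. In the symmetric case $i=j=k$, the redex of $\encode{S_2\bnf S_3}$ tests at most $i<n$ positions by name-matching (the maximum power of $\delta$), so at least one $b_m$ appears only in a binding-name position of that redex. Replacing that occurrence in $S_3$ by a fresh $c$ yields $S_4$; by name invariance together with compositionality the same reduction witness is still enabled for $\encode{S_2\bnf S_4}$, contradicting Proposition~\ref{prop:deadlock} since $S_2\bnf S_4\not\redar$ in the source (the name-match $\pro c$ cannot match $b_m$). In each asymmetric case, $i\neq j$ or $j\neq k$, the same encoded component -- $\encode{S_2}$, or dually $\encode{S_0}$ or $\encode{S_3}$ -- must offer two syntactically distinct target redexes. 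By compositionality these redexes both remain available in the joint context $\encode{S_0\bnf S_2\bnf S_3}$, and scheduling them in turn yields either a target state that operational correspondence would map to a source reduction $S_0\bnf S_2\bnf S_3\Redar \ok\bnf\{b_1\bullet\cdots\bullet b_n/x\}\ok$ that the source does not admit (the single output $S_2$ cannot be consumed twice), or, if the two redexes are mutually exclusive but replicable via $*$ inside the encoding, an infinite target reduction sequence violating divergence reflection.

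The main obstacle I anticipate is the asymmetric case: one must argue rigorously that two source reductions incompatible at source level force two separate target redexes inside a single encoded subprocess, and that these persist in the larger parallel context. This rests delicately on compositionality (to locate a redex inside $\encode{S_2}$ independently of its partner) and on the fact that a single target reduction involves exactly one input/output pair of a well-defined arity. Once this localisation is in place, every branch of the case analysis collides with one of operational correspondence, name invariance, or divergence reflection in Definition~\ref{def:ve}, completing the contradiction.
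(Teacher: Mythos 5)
Your proposal follows essentially the same route as the paper's own proof: the same probes ($S_0$, $S_1$, a compound output $S_2$ of more names than the observed interaction arity, and an all-matching intensional input $S_3$), the same case split on the arities of the induced target redexes, the unmatched-name/fresh-renaming argument via name invariance and Proposition~\ref{prop:deadlock} in the equal-arity case, and the double-redex/roll-back/divergence argument in the unequal-arity case. The only differences are cosmetic (continuations $\ok$ versus $\oap{}{m}$, and a fresh name $c$ in place of the paper's swap of $a_i$ with $m$), and your sketch of the asymmetric case is at roughly the same level of detail as the paper's.
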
}{}
\begin{proof}
For simplicity the proof will show that any encoding from $\Lang_{A,M,D,I}$ into
a language $\Lang_{\alpha,\beta,\gamma,\delta}$ is impossible,
the generalisation follows from the fact that composition of encodings is an encoding.

The proof is by contradiction.
Assume there exists a valid encoding $\xtrans\cdot$ from $\Lang_{A,M,D,I}$ into
$\Lang_{\alpha,\beta,\gamma,\delta}$ where $\delta\leq \mathit{NM}$.
Consider the encoding of the processes $S_0=\iap {} x.\oap {} m$ and $S_1=\oap {} a$.
Clearly $\xtrans{S_0 \bnf S_1}\redar$  since $S_0\bnf S_1\redar$.
There exists a reduction $\xtrans{S_0\bnf S_1}\redar$ that must be between an
input and output that both have (the same) maximal arity $k$.
\shortproof{}{(Observe that when $\beta=M$, i.e.~encoding into a monadic language, then $k$ must be 1.)}

Now define the following processes 
$S_2 \define \oap {} {a_1\bullet\ldots\bullet a_{k+2}}$
and
$S_3 \define \iap {} {\pro{a_1}\bullet\ldots\bullet\pro{a_{k+2}}} . \oap {} m$
where $S_2$ outputs $k+2$ distinct names in a single term, and $S_3$ matches all of these names in
a single intensional pattern.

Since $S_2\bnf S_0\redar$ it must be that $\xtrans{S_2\bnf S_0}\redar$
for the encoding to be valid.
Now consider the maximal arity of the reduction $\xtrans{S_2\bnf S_0}\redar$:
\begin{itemize}
\item If the arity is $k$ \shortproof{}{(this must hold for $\beta=M$, i.e.~encoding into a monadic language)}
  consider the reduction $\xtrans{S_2\bnf S_3}\redar$ with the maximal arity $j$ and
  that must exist since $S_2\bnf S_3\redar$.
  Now consider the relationship of $j$ and $k$.
  \begin{enumerate}
  \item If $j=k$ \shortproof{}{(this must hold for $\beta=M$, i.e.~encoding into a monadic language)}
    then the upper bound on the number of names that are matched in the
    reduction is $k+1$\shortproof{.}{ when
    $\gamma=C$ and $\delta=\mathit{NM}$, i.e.~encoding into a channel-based pattern-matching language.
    (The upper bound is $k$ for $\gamma=D$ and $\delta=\mathit{NM}$;
    1 for $\gamma=C$ and $\delta=\mathit{NO}$;
    and 0 for $\gamma=D$ and $\delta=\mathit{NO}$.)}
    Since not all $k+2$ tuples of names from $\renpol (a_i)$ can be matched in the reduction then
    there must be at least one tuple $\renpol (a_i)$ for $i\in\{1,\ldots,k+2\}$ that is not being
    matched in the interaction $\xtrans{S_2 \bnf S_3}\redar $.
    Now construct $S_4$ that differs from $S_3$ only by swapping one such name $a_i$ with
    $m$:\shortproof{ $
    S_4 \define \iap {} {\pro{a_1}\bullet\ldots\pro{a_{i-1}}\bullet\pro m
    \bullet\pro{a_{i+1}}\ldots\pro{a_{k+2}}}. \oap {} {a_i}$.}{
    \begin{equation*}
    S_4 \define \iap {} {\pro{a_1}\bullet\ldots\pro{a_{i-1}}\bullet\pro m
    \bullet\pro{a_{i+1}}\ldots\pro{a_{k+2}}}. \oap {} {a_i}\; .
    \end{equation*}}
    Now consider the context $\context {C^N_|} {\encode{S_2},\encode\cdot}=\encode{S_2\bnf \cdot}$
    where $N=\{\wt a\cup m\}$.
    Clearly neither $\context {C^N_|} {\encode{S_2},\encode{\zero}}\redar$ nor
    $\context {C^N_|} {\encode{S_2},\encode{S_4}}\redar$ as this would contradict
    Proposition~\ref{prop:deadlock}.
    However, since $S_3$ and $S_4$ differ only by the position of one name whose tuple $\renpol(\cdot)$
    does not appear in the
    reduction $\encode{S_2\bnf S_3}\redar$, it follows that the reason 
    $\context {C^N_|} {\encode{S_2},\encode{S_4}}\not\redar$ must be due to a structural
    congruence difference between $\context {C^N_|} {\encode{S_2},\encode{S_3}}$
    and $\context {C^N_|} {\encode{S_2},\encode{S_4}}$.
    Further, by compositionality of the encoding the difference can only be between
    $\encode{S_3}$ and $\encode{S_4}$.
    Since Proposition~\ref{prop:deadlock} ensures that $\encode{S_3}\not\redar$
    and $\encode{S_4}\not\redar$, the only possibility is a structural congruence difference
    between $\encode{S_3}$ and $\encode {S_4}$.
    \shortproof{Now exploiting the substitution $\sigma=\{m/a_i,a_i/m\}$ that when applied to
    $S_4$ makes it $S_3$ yields contradiction.}{

    Now proceed by induction on the structure of the encoded processes to determine their
    structural difference.
    \begin{enumerate}
    \item If the difference is a restriction $\res c$ then observe that restrictions can
      only prevent reductions, not introduce reductions. Since the reduction is prevented
      for $S_4$ it follows that $\encode {S_4}$ must be of the form $\res c R_4$
      where $R_4$ includes an input or output that contains $c$.
      Now consider the substitution $\sigma=\{m/a_i,a_i/m\}$ that when applied to
      $S_4$ makes it $S_3$.
      By name invariance it must hold that there exists $\sigma '$ such that
      $\encode{\sigma S_4}=\sigma ' \encode{S_4} = \encode{S_3}$.
      However, by definition of the application of a substitution $\sigma '$ cannot
      rename $c$ and so $\context {C^N_|} {\encode{S_2},\sigma'\encode{S_4}}\not\redar$
      which means that $\context {C^N_|} {\encode{S_2},\encode{S_3}}\not\redar$
      which yields contradiction.

    \item If the difference is an $\ifte {c_1}{c_2}S T$ construct then consider the relation
      of $c_1$ to $c_2$ that can only depend upon the translation and must relate $a_i$ or $m$.
      As in the previous case, name invariance and the substitution $\sigma=\{m/a_i,a_i/m\}$
      that when applied to $S_4$ makes it $S_3$, can be used to show contradiction.
    \item Otherwise proceed by induction.
    \end{enumerate}}
%
%
  \item If $j\neq k$
    \shortproof{}{(then it must be that $\beta=P$, i.e.~encoding into a polyadic language)}
    then we have that \encode{$S_2}$ must be able to interact with both
    arity $k$ and arity $j$.
    That is, $\xtrans{S_2\bnf\cdot}=\context{C^N_|}{\xtrans {S_2},\xtrans\cdot}$
    where $N=\{\wt a\cup m\}$ and that
    $\context{C^N_|}{\xtrans {S_2},\xtrans{S_0}}$ reduces with arity $k$ and
    $\context{C^N_|}{\xtrans {S_2},\xtrans{S_3}}$ reduces with arity $j$.
    Now it is straightforward, if tedious, to show that since $S_0\bnf S_3\not\redar$
    that $\context{C^N_|}{\xtrans {S_2},\xtrans{S_0\bnf S_3}}$ can perform the same initial
    reductions as either
    $\context{C^N_|}{\xtrans {S_2},\xtrans{S_0\bnf \zero}}$ or 
    $\context{C^N_|}{\xtrans {S_2},\xtrans{\zero\bnf S_3}}$
    by exploiting operational correspondence and Proposition~\ref{prop:deadlock}.

    Thus, it can be shown that $\context{C^N_|}{\xtrans {S_2},\xtrans{S_0\bnf S_3}}$ can perform both
    the $k$ arity reduction of $\xtrans{S_2\bnf S_0}\redar$ and
    the $j$ arity reduction of $\xtrans{S_2\bnf S_3}\redar$.
    Now by exploiting the structural congruence rules it follows that neither of these
    initial reductions can prevent the other occurring. \shortproof{}{
    (The only structural congruence rule that could prevent an interaction occurring is
    the $\ifte {c_1}{c_2}P Q$ where $c_1$ or $c_2$ is modified by the first interaction.
    However, this would require that this congruence be under the input involved in the
    interaction, which would then mean the, now prevented, interaction could not occur
    before the other which contradicts the encoding.)}
    Thus,
    $\context{C^N_|}{\xtrans {S_2},\xtrans{S_0\bnf S_3}}$ must be able to do both of these 
    initial reductions in any order. \shortproof{}{

    }Now consider the process $R$ that has performed both of these initial reductions.
    By operational correspondence it must be that $R\not\Redar\beq\encode{\oap {} m\bnf \oap {} m}$ since
    $S_2\bnf S_0\bnf S_3\not\Redar \oap {} m\bnf \oap {} m$.
    Therefore, $R$ must be able to roll-back the initial step with arity $j$;
    i.e~reduce to a state that is equivalent to the reduction not occurring. \shortproof{}{
    (Or the initial step with arity $k$, but either one is sufficient as
    by operational correspondence $R\Redar\beq\encode{\oap {} m\bnf S_3}$.)

    }Now consider how many names are being matched in the initial reduction with arity $j$.
    If $j< k+1$ the technique of differing on one name used in the case of $j=k$ can be used
    to show that this would introduce divergence on the potential roll-back and thus contradict
    a valid encoding. \shortproof{}{
    (The initial reduction could have occurred when one or more names are different, and thus
    would happen anyway and need to be rolled back. Since the roll-back must not change the
    ability of the processes to interact with other processes, this can be shown to lead to
    an infinite reduction sequence, and thus contradict divergence reflection.)

    }Therefore it must be that $j \geq k + 1$.
    Finally, by exploiting name invariance and substitutions like
    $\{(b_1\bullet\ldots\bullet b_{j + 1})/a_1\}$ applied to $S_2$ and $S_3$
    it follows that either $j\geq k + j + 1$ or both $S_2$ and $S_3$ must have
    infinitely many initial reductions which yields divergence.
  \end{enumerate}
\item If the arity is not $k$ then proceed like the second case above.
\end{itemize}
\vspace{-0.4cm}
\end{proof}}

\theproof

The proof above is for the general case,
there are existing proofs in the literature that can be exploited for partial results.
In particular,
the techniques in \cite{Carbone:2003:EPP:941344.941346}, generalised in \cite{G:IC08,G:DC10},
show that languages that allow for an arbitrary number of names to be matched in interaction
cannot be encoded into to languages that can only match a limited number of names in interaction.

\section{Conclusions and Future Work}
\label{sec:conclude}

Intensional communication primitives alone are highly expressive and can encode the
behaviours of synchronous, polyadic, channel-based, and name-matching communication
primitives. Thus, even the least intensional language 
can encode both the greatest non-intensional language 
and all the other intensional languages. 

There are some languages that include intensionality in their operators, both outside of
communication as in the Spi calculus \cite{Abadi:1997:CCP:266420.266432}, or as part of communication as in
Concurrent Pattern Calculus (CPC) \cite{GivenWilsonGorlaJay10,givenwilson:hal-00987578} (and variations thereof
\cite{GivenWilsonPHD})
and Psi calculi \cite{BJPV11,DBLP:conf/tgc/BorgstromGPVP13}.
However, only one variation of CPC matches any of the family of
intensional languages defined here, that being $\lsmdi$ \cite{GivenWilsonPHD}.
The equivalent expressiveness of all intensional languages here makes exploring each
variant less interesting from a theoretical perspective, but also provides assurance
that none is ``better'' than another from an expressiveness perspective.
This also allows the expressiveness results here to be applied to both CPC and Psi calculi.

Future work in this area could include exploring the r\^ole of either symmetry or
logics in communication. Symmetry of primitives would allow for better
understanding of languages in the style of fusion calculus \cite{705654}
or CPC. This may be of particular interest since symmetry has been used to
show separation results, i.e.~impossibility of encodings, from CPC
into many languages represented here, as well as both fusion calculus and Psi calculi
\cite{GivenWilsonGorlaJay10,GivenWilsonGorla13,givenwilson:hal-00987578}.
Alternatively, considering logics that play a r\^ole in communication would allow for
capturing the behaviours of languages like Concurrent Constraint Programming
\cite{Saraswat:1991:SFC:99583.99627} or Psi calculi. Again the r\^ole of logics has been
used to show separation of Psi calculi from CPC~\cite{GivenWilsonGorla13}.

\newcommand{\wrw}[1]{#1}

\wrw{
\subsection*{Related Work}

This section does not attempt to provide a detailed account of all related works
as this would require an entire paper alone.
Instead, some of the more closely related works are referenced here along
with those that provide the best argument for and against the choices made here.
Further, related works involving calculi with intensional communication primitives
are highlighted.

Expressiveness in process calculi and similar languages has been widely explored,
even when focusing mostly upon the choice of communication primitives
\cite{Palamidessi:2003:CEP:966707.966709,journals/iandc/BusiGZ00,Carbone:2003:EPP:941344.941346,Carbone:2003:EPP:941344.941346,DeNicola:2006:EPK:1148743.1148750,Haagensen200885,G:IC08,GivenWilsonGorlaJay10,GivenWilsonPHD,givenwilson:hal-00987578}.
The choice of valid encodings here is that used, sometimes with mild adaptations, in
\cite{G:CONCUR08,G:DC10,GivenWilsonGorlaJay10,DBLP:journals/corr/abs-1011-6436,GivenWilsonPHD,givenwilson:hal-00987578}
and has also inspired similar works \cite{LPSS10,Lanese:2010:EPP:2175486.2175506,gla12}.
However, there are alternative approaches to encoding criteria or comparing expressive power
\cite{Boudol:1989:NAC:101969.101981,Simone85higher-levelsynchronising,Carbone:2003:EPP:941344.941346,Parrow:2008:EPA:1365089.1365211,gla12}.
Further arguments for, and against, the valid encodings here can be found in
\cite{G:CONCUR08,G:DC10,gla12,givenwilson:hal-00987578}.

There are also some results that fit in between the original 16 languages of \cite{G:IC08}
and those presented here with full intensionality.
The polyadic synchronisation $\pi$-calculi \cite{Carbone:2003:EPP:941344.941346}
allows a vector of names in place of the channel name/term considered in \cite{G:IC08} and 
here. This is likely to have similar expressive power to a name-matching polyadic languages,
and can be easily represented in $\lsmci$, with
inputs and outputs of the form $\iap s x.P$ and $\oap s a.P$ respectively.


There are already existing specific results for the intensional process calculi mentioned here.
Concurrent Pattern Calculus (CPC) can homomorphically encode:
$\pi$-calculus, Linda, and Spi Calculus while none of them can encode CPC
\cite{GivenWilsonGorlaJay10,givenwilson:hal-00987578}. Meanwhile fusion calculus and Psi calculi are
unrelated to CPC in that neither can encode CPC, and CPC cannot encode either of
them \cite{GivenWilsonGorlaJay10,GivenWilsonPHD,givenwilson:hal-00987578}.
Similarly Psi calculi can homomorphically encode $\pi$-calculus \cite{BJPV11}
and indirectly many other calculi, or directly using the techniques here.
Impossibility of encoding results for both CPC and Psi calculi into many calculi
can be derived from the results here.

\subsection*{Motivation}

Clearly intensionality provides significant expressiveness when considering process
calculi.
However, there are further motivations for intensional process calculi that this
paper has not attempted to address.

When considering computational expressiveness intensionality proves to increase
expressive power in the sequential setting \cite{Jay2011}.
By allowing for functions that can match on the structure of their arguments
(in the style that patterns can match against terms here), combinatory logics
exist that prove more expressive than $\lambda$-calculus \cite{Jay2011}.
Indeed, when relating sequential computation to process calculi, intensionality
in the latter allows for both expressing intensionality in the former \cite{GW:ICTAC14},
and for more elegantly capturing Turing Machines \cite{givenwilson:hal-00987594}.

Cryptography, protocols, and security have proved motivating for Spi Calculus \cite{Abadi:1997:CCP:266420.266432}
and pattern-matching Spi Calculus \cite{Haack:2006:PS:1165126.1165127}, both of which introduce intensionality, the latter
in communication as considered here. However, the intensionality presented here
is too strong to support encryption (in the style of Spi Calculus) since it allows
cracking of encryption via patterns of the form ${\sf enc}\bullet(\lambda p\bullet \lambda k)$
where $p$ binds to the plaintext and $k$ to the key \cite{Haack:2006:PS:1165126.1165127,GivenWilsonPHD}.

Psi calculi and sorted Psi calculi attempt to present a general framework for
process calculi that can represent many existing process calculi as an instance
of a (sorted) Psi calculi \cite{DBLP:conf/tgc/BorgstromGPVP13}.
This paper has similar goals in that the results here improve understanding of
the relations of many process calculi to one another. Further, the results allow
for considering the most general language $\Lang_{S,P,C,I}$ while also
recognising that any intensional language has equal expressiveness.

Similarly, the motivation for Concurrent Pattern Calculus is to both
generalise the interaction approaches of many process calculi,
and to represent desirable modeling properties such as exchange
\cite{GivenWilsonPHD,givenwilson:hal-00987578}.
Indeed, \cite{GivenWilsonPHD,20110307:getting_the_goods} demonstrate how intensionality can be used to
capture the pattern-matching of functional programming and
data type constraints with more granularity than a type system.

}

\vspace{-0.3cm}

\bibliographystyle{eptcs}
\bibliography{local}

\newpage
\section*{Appendix A}

This appendix contains a more detailed proof for Theorem~\ref{thm:no-int-any-2}.

\renewcommand{\shortproof}[2]{#2} 

\theproof

\end{document}